\newcommand*\vs[1]{\vectorsym{#1}}
\newcommand*\mf{\mathop{}\!\mathtt{MF}}
\newtheorem{assumption}{Assumption}
\begin{document}
\title{Efficient Ridesharing Order Dispatching with \\ Mean Field Multi-Agent Reinforcement Learning}

% Ver. I
\author{
Minne Li\textsuperscript{1},
Zhiwei (Tony) Qin\textsuperscript{2},
Yan Jiao\textsuperscript{2},
Yaodong Yang\textsuperscript{1},
Zhichen Gong\textsuperscript{1},
Jun Wang\textsuperscript{1}, \\
Chenxi Wang\textsuperscript{3},
Guobin Wu\textsuperscript{3},
Jieping Ye\textsuperscript{3}}
\affiliation{%
  \institution{
  \textsuperscript{1}University College London,
  \textsuperscript{2}DiDi Research America,
  \textsuperscript{3}DiDi Research}
}

% The default list of authors is too long for headers.
\renewcommand{\shortauthors}{M. Li et al.}
\renewcommand{\shorttitle}{Efficient Ridesharing Order Dispatching with MFRL}
\newcommand{\yaodong}[1]{{\bf \color{red} [[Yaodong: #1']]}}

\begin{abstract}
A fundamental question in any peer-to-peer ridesharing system is how to, both effectively and efficiently, dispatch user's ride requests to the right driver in real time.
Traditional rule-based solutions usually work on a simplified problem setting, which requires a sophisticated hand-crafted weight design for either centralized authority control or decentralized multi-agent scheduling systems.
Although recent approaches have used reinforcement learning to provide centralized combinatorial optimization algorithms with informative weight values, their single-agent setting can hardly model the complex interactions between drivers and orders.
In this paper, we address the order dispatching problem using multi-agent reinforcement learning (MARL), which follows the distributed nature of the peer-to-peer ridesharing problem and possesses the ability to capture the stochastic demand-supply dynamics in large-scale ridesharing scenarios.
Being more reliable than centralized approaches, our proposed MARL solutions could also support fully distributed execution through recent advances in the Internet of Vehicles (IoV) and the Vehicle-to-Network (V2N).
Furthermore, we adopt the mean field approximation to simplify the local interactions by taking an average action among neighborhoods. 
The mean field approximation is capable of globally capturing dynamic demand-supply variations by propagating many local interactions between agents and the environment.
Our extensive experiments have shown the significant improvements of MARL order dispatching algorithms over
several strong baselines on the gross merchandise volume (GMV), and order response rate measures.
Besides, the simulated experiments with real data have also justified that our solution can alleviate the supply-demand gap during the rush hours, thus possessing the capability of reducing traffic congestion.
\end{abstract}

%
% The code below should be generated by the tool at
% http://dl.acm.org/ccs.cfm
% Please copy and paste the code instead of the example below.
%
\begin{CCSXML}
<ccs2012>
<concept>
<concept_id>10010147.10010257.10010258.10010261.10010275</concept_id>
<concept_desc>Computing methodologies~Multi-agent reinforcement learning</concept_desc>
<concept_significance>500</concept_significance>
</concept>
<concept>
<concept_id>10010405.10010481.10010485</concept_id>
<concept_desc>Applied computing~Transportation</concept_desc>
<concept_significance>300</concept_significance>
</concept>
</ccs2012>
\end{CCSXML}

\ccsdesc[500]{Computing methodologies~Multi-agent reinforcement learning}
\ccsdesc[300]{Applied computing~Transportation}

\keywords{Multi-Agent Reinforcement Learning, Mean Field Reinforcement Learning, Order Dispatching}

\maketitle

\newcommand{\UU}{{\mathcal{U}}}

% Math
\newcommand{\vect}[1]{\mathbf{#1}}
\newcommand{\dotproduct}[2]{\transp{#1} {#2}}
\renewcommand{\exp}[1]{\text{exp}({#1})}
\newcommand{\dd}{\mathrm{d}}
\newcommand{\Expect}{\mathbb{E}}

\newcommand{\Eb}[2]{\ifthenelse{\equal{#1}{}}{\mbox{E}_{b} \left[ #2\right]}{\mbox{E}_{b} \left[ #2  \middle\vert #1 \right]}}
%\newcommand{\E}[2]{\mbox{E} \left[ #2  | #1 \right]}

% MDP
\newcommand{\St}{\mathcal{S}}
\newcommand{\Ob}{\mathcal{O}}
\newcommand{\Ac}{\mathcal{A}}
\newcommand{\expR}{\mathcal{R}}
\newcommand{\obj}{J}

% Algo parameters
\newcommand{\cp}{\vect{v}}
\newcommand{\ap}{\theta}
\newcommand{\x}{\vect{x}}
\newcommand{\e}{\vect{e}}
\newcommand{\w}{\vect{w}}
\newcommand{\sv}{\alpha_v}
\newcommand{\sw}{\alpha_w}
\newcommand{\Ncp}{{N_{\cp}}}
\newcommand{\Nap}{{N_{\ap}}}

% Return and TD error
\newcommand{\Gret}{R^{\lambda}}
\newcommand{\deltaret}{\delta^{\lambda}}

% Function approximation
\newcommand{\comb}{\vect{z}}
\newcommand{\vfa}{{\hat{V}}}
\newcommand{\comp}{\psi}
\newcommand{\qfa}{{\hat{Q}}}

% Gradient
\newcommand{\gradt}[2]{{\bf g}({#2})}
\newcommand{\gradobj}{\grad{\obj(\ap)}{\ap}}
\newcommand{\gradtobj}{{\bf g}(\ap)}
\newcommand{\conva}{\tilde{\mathcal{Z}}}
\newcommand{\convhat}{\hat{\mathcal{Z}}}
\newcommand{\grada}{\gradt{\obj}{\ap}}
\newcommand{\gradtrue}{\grad{\obj}{\ap}}

\newcommand{\TODO}[1]{\textbf{TODO} $[$ \emph{#1} $]$}

\newcommand{\seeerrata}{\footnote{See errata in section~\ref{errata}}}

\newcommand{\opac}{Off-PAC\xspace}

\newcommand{\minne}[1]{{\bf \color{red} [[Minne says: #1']]}}
\newcommand{\jun}[1]{{\bf \color{blue} [[Jun says ``#1'']]}}
\newcommand{\tony}[1]{{\bf \color{violet} [[Tony says ``#1'']]}}
\newcommand{\yan}[1]{{\bf \color{cyan} [[Yan says ``#1'']]}}
\newcommand{\chenxi}[1]{{\bf \color{magenta} [[Chenxi says ``#1'']]}}
\newcommand{\zhichen}[1]{{\bf \color{red} [[Zhichen says ``#1'']]}}
\renewcommand{\thefootnote}{\fnsymbol{footnote}}

\section{Introduction}
\label{sec:intro}
Real-time ridesharing refers to the task of helping to arrange one-time shared rides on very short notice \cite{amey2011real, furuhata2013ridesharing}. Such technique, embedded in popular platforms including Uber, Lyft, and DiDi Chuxing, has greatly transformed the way people travel nowadays.
By exploiting the data of individual trajectories in both space and time dimensions, it offers more efficiency on traffic management, and the traffic congestion can be further alleviated as well \cite{92d45ce9b7c8482185638b0242fdfd52}.

One of the critical problems in large-scale real-time ridesharing systems is how to dispatch orders, i.e., to assign orders to a set of active drivers on a real-time basis.
Since the quality of order dispatching will directly affect the utility of transportation capacity, the amount of service income, and the level of customer satisfaction,
 therefore, solving the problem of order dispatching is the key to any successful ride-sharing platform.
In this paper, our goal is to develop an intelligent decision system to maximize the gross merchandise volume (GMV), i.e., the value of all the orders served in a single day, with the ability to scale up to a large number of drivers and robust to potential hardware or connectivity failures.

The challenge of order dispatching is to find an optimal trade-off between the short-term and long-term rewards.
When the number of available orders is larger than that of active drivers within the order broadcasting area (shown as the grey shadow area in the center of Fig. \ref{subfig:neighborhood}), the problem turns into finding an optimal order choice for each driver. 
Taking an order with a higher price will contribute to the immediate income; however, it might also harm the GMV in the long run if this order takes the driver to a sparsely populated area.
As illustrated in Fig. \ref{subfig:orderdispatching}, considering the two orders starting from the same area but to two different destinations, a neighboring central business district (CBD) and a distant suburb. A driver taking the latter one may have a higher one-off order price due to the longer travel distance, but the subsequent suburban area with little demand could also prevent the driver from further sustaining income.
Dispatching too many such orders will, therefore, reduce the number of orders taken and harm the long-term GMV. The problem becomes more serious particularly during the peak hours when the situation in places with the imbalance between vehicle supply and order demand gets worse.
As such, an intelligent order dispatching system should be designed to not only assign orders with high prices to the drivers, but also to anticipate the future demand-supply gap and distribute the imbalance among different destinations.
In the meantime, the pick-up distance should also be minimized, as the drivers will not get paid during the pick-up process; on the other hand, long waiting time will affect the customer experience.

\begin{figure*}
  \centering
	\begin{subfigure}[b]{.66\linewidth}
		\centering
		\includegraphics[height=1.2in]{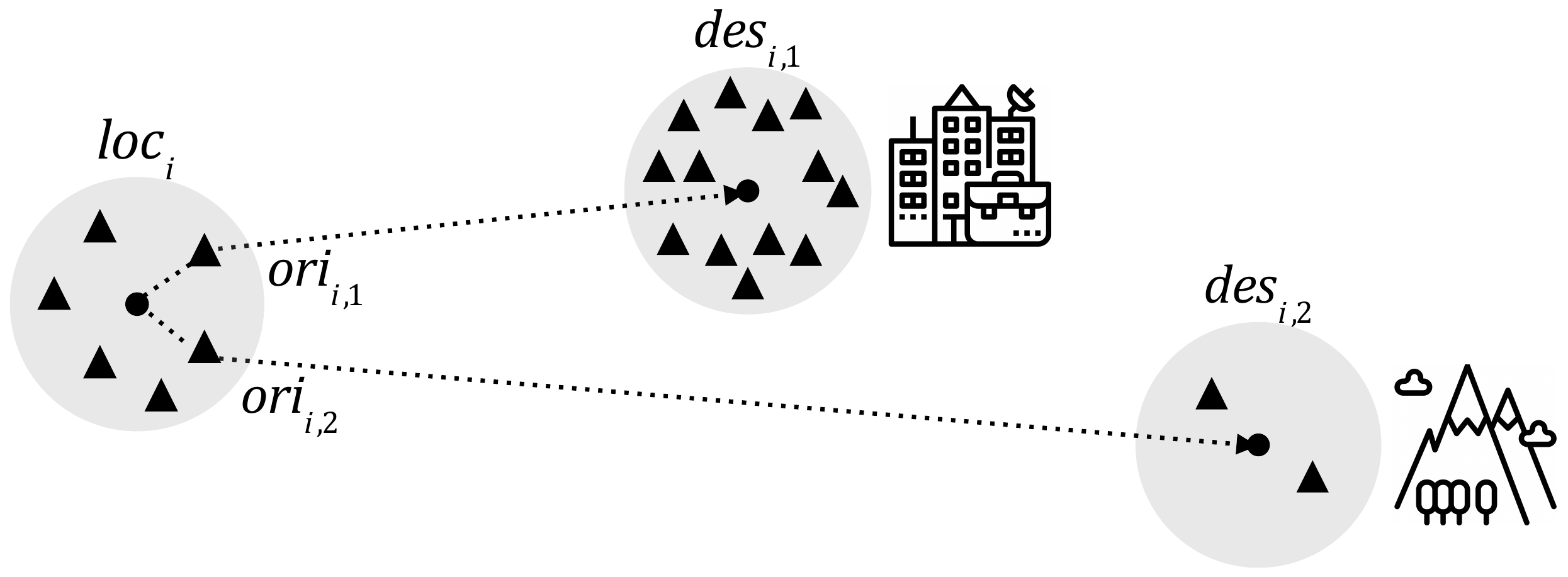}
		\caption{}
		\label{subfig:orderdispatching}
	\end{subfigure}
	\begin{subfigure}[b]{.33\linewidth}
		\centering
		\includegraphics[height=1.2in]{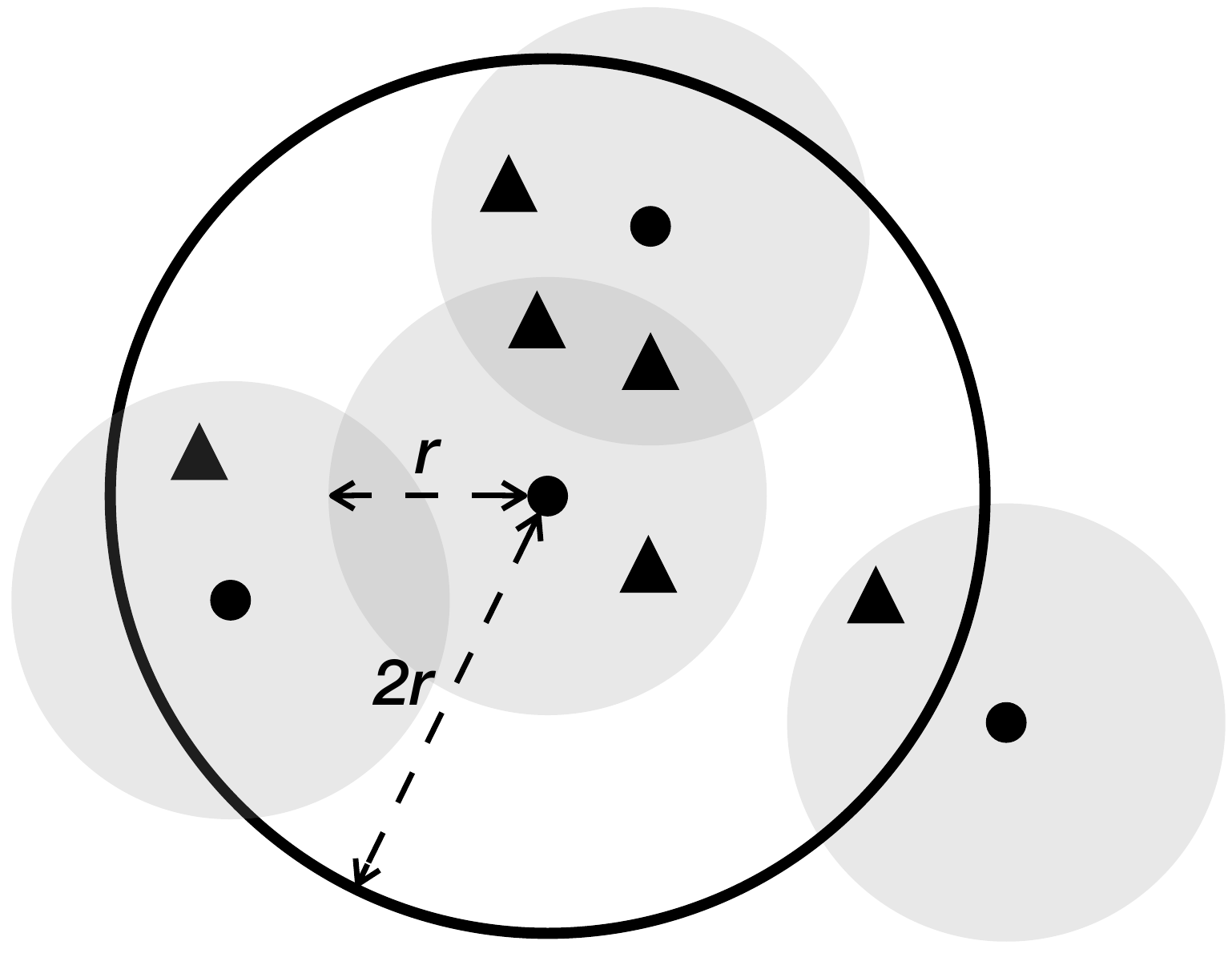}
		\caption{}
		\label{subfig:neighborhood}
	\end{subfigure}
  \caption{The order dispatching problem. (a) Two order choices (black triangle, departing from $ori_{i,1}$ and $ori_{i,2}$ respectively) within a driver $i$'s (black dot, located at $loc_i$) order receiving area (grey shadow), where one ends at a neighboring CBD (located at $des_{i,1}$) with high demand, and the other ends at a distant suburb (located at $des_{i,2}$) with low demand. Both orders have the same pick-up distance for driver $i$. (b) Details within a neighborhood, where the radius of the order receiving area and the neighborhood (black circle) are $r$ and $2r$ respectively.}
  \label{fig:MAOD}
\end{figure*}

One direction to tackle the order dispatching challenge has been to apply hand-crafted features to either centralized dispatching authorities (e.g., the combinatorial optimization algorithm \cite{Papadimitriou:1982:COA:31027, Zhang:2017:TOD:3097983.3098138}) or distributed multi-agent scheduling systems \cite{Wooldridge:2009:IMS:1695886, alshamsi2009multiagent}, in which a group of autonomous agents that share a common environment interact with each other.
However, the system performance relies highly on the specially designed weighting scheme.
For centralized approaches, another critical issue is the potential "single point of failure" \cite{lynch2009single}, i.e., the failure of the centralized authority control will fail the whole system.
Although the multi-agent formulation provides a distributed perspective by allowing each driver to choose their order preference independently,
existing solutions require rounds of direct communications between agents during execution \cite{seow2010collaborative}, thus being limited to a local area with a small number of agents. 

Recent attempts have been made to formulate this problem with centralized authority control and model-free reinforcement learning (RL) \cite{sutton1998reinforcement}, which learns a policy by interacting with a complex environment.
However, existing approaches \cite{Xu:2018:LOD:3219819.3219824} formulate the order dispatching problem with the single-agent setting, which is unable to model the complex interactions between drivers and orders, thus being oversimplifications of the stochastic demand-supply dynamics in large-scale ridesharing scenarios.
Also, executing an order dispatching system in a centralized manner still suffers from the reliability issue mentioned above that is generally inherent to the centralized architecture.

Staying different from these approaches, in this work we model the order dispatching problem with multi-agent reinforcement learning (MARL) \cite{Busoniu2010}, where agents share a centralized judge (the critic) to rate their decisions (actions) and update their strategies (policies).
The centralized critic is no longer needed during the execution period as agents 
can follow their learned policies independently, making the order dispatching system more robust to potential hardware or connectivity failures.
With the recent development of the Internet of Things (IoT) \cite{6740844, GUBBI20131645, ATZORI20102787} and the Internet of Vehicles (IoV) \cite{6823640, 6969789},
the fully distributed execution could be practically deployed by distributing the centralized trained policy to each vehicle through the Vehicle-to-Network (V2N) \cite{abboud2016interworking}.
By allowing each driver to learn by maximizing its cumulative reward through time, the reinforcement learning based approach relieves us from designing a sophisticated weighting scheme for the matching algorithms.
Also, the multi-agent setting follows the distributed nature of the peer-to-peer ridesharing problem, providing the dispatching system with the ability to capture the stochastic demand-supply dynamics in large-scale ridesharing scenarios.
Meanwhile, such fully distributed executions also enable us to scale up to much larger scenarios with many more agents, i.e., a scalable real-time order dispatching system for ridesharing services with millions of drivers.

Nonetheless, the major challenge in applying MARL to order dispatching lies in the changing dynamics of two components: the size of the action set and the size of the population.
As illustrated in Fig. \ref{subfig:neighborhood}, the action set for each agent is defined as a set of neighboring active orders within a given radius $r$ from each active driver (shown as the shadow area around each black dot).
As the active orders will be taken and new orders keep arriving, the size and content of this set will constantly change over time.
The action set will also change when the agent moves to another location and arrives in a new neighborhood. On the other hand, 
drivers can also switch between online and offline in the real-world scenario,
the population size for the order dispatching task is therefore also changing over time.

In this paper, we address the two problems of variable action sets and population size by extending the actor-critic policy gradient methods.
Our methods tackle the order dispatching problem within the framework of centralized training with decentralized execution.
The critic is provided with the information from other agents to incorporate the peer information, while each actor behaves independently with local information only.
To resolve the variable population size, we adopt the mean field approximation to transform the interactions between agents to the pairwise interaction between an agent and the average response from a sub-population in the neighborhood.
We provide the convergence proof of mean field reinforcement learning algorithms with function approximations to justify our algorithm in theory.
To solve the issue of changing action set, we use the vectorized features of each order as the network input to generate a set of ranking values, which are fed into a Boltzmann softmax selector to choose an action.
Experiments on the large-scale simulator show that compared with variant multi-agent learning benchmarks, the mean field multi-agent reinforcement learning algorithm gives the best performance towards the GMV, the order response rate, and the average pick-up distance. 
Besides the state of the art performance, our solution also enjoys the advantage of distributed execution, which has lower latency and is easily adaptable to be deployed in the real world application.

\section{Method}
In this section, we first illustrate our definition of order dispatching as a Markov game, and discuss two challenges when applying MARL to this game.
We then propose a MARL approach with the independent $Q$-learning, namely the independent order dispatching algorithm (IOD), to solve this game.
By extending IOD with mean field approximations, which capture dynamic demand-supply variations by propagating many local interactions between agents and the environment, we finally propose the cooperative order dispatching algorithm (COD).

\subsection{Order Dispatching as a Markov Game}
\subsubsection{Game Settings}
\label{subsec:formation}
We model the order dispatching task by a Partially Observable Markov Decision Process (POMDP)~\cite{Littman:1994:MGF:3091574.3091594} in a fully cooperative setting, defined by a tuple $\Gamma={\langle}\St, \mathcal{P}, \Ac, \mathcal{R}, \mathcal{O}, N, \gamma {\rangle}$, where $\St, \mathcal{P}, \Ac, \mathcal{R}, \mathcal{O}, N, \gamma$ are the sets of states, transition probability functions, sets of joint actions, reward functions, sets of private observations, number of agents, and a discount factor respectively.
Given two sets $\mathcal{X}$ and $\mathcal{Y}$, we use $\mathcal{X} \times \mathcal{Y}$ to denote the Cartesian product of $\mathcal{X}$ and $\mathcal{Y}$, i.e.,
$\mathcal{X} \times \mathcal{Y} = \{(x, y) | x \in \mathcal{X}, y \in \mathcal{Y}\}$.
The definitions are given as follows:

\leftmargini=4mm
\begin{itemize}
\item $N$: $N$ homogeneous agents identified by $i \in \mathcal{I} \equiv \{1,...,N\}$ are defined as the active drivers
in the environment.
As the drivers can switch between online and offline via a random process,
the number of agents $N$ could change over time.

\item $\St, \mathcal{O}$: At each time step $t$, agent $i$ draws private observations $o_i^t \in \mathcal{O}$ correlated with the true environment state $s^t \in \St$ according to the observation function $\mathcal{S} \times \mathcal{I} \rightarrow \mathcal{O}$.
The initial state of the environment is determined by a distribution $p_1(s^1): \St \rightarrow [0,1]$.
A typical environmental state $s$ includes the order and driver distribution, the global timestamp, and other environment dynamics (traffic congestion, weather conditions, etc.).
In this work, we define the observation for each agent $i$ with three components: agent $i$'s location $loc_i$, the timestamp $t$, and an on-trip flag to show if agent $i$ is available to take new orders. 

\item $\mathcal{P}, \Ac$: In the order dispatching task, each driver can only take active orders within its neighborhood (inside a given radius, as illustrated in Fig. \ref{subfig:neighborhood}). Hence, agent $i$'s action set $\Ac_i$ is defined as its own active order pool based on the observation $o_i$.
Each action candidate $a_{i,m} \in \Ac_i$ is parameterized by the normalized vector representation of corresponding order's origin $ori_{i,m}$ and destination $des_{i,m}$, i.e., $a_{i,m} \equiv [ori_{i,m}, des_{i,m}]$.
At time step $t$, each agent takes an action $a_i^t \in \mathcal{A}_i$,
forming a set of joint driver-order pair $\mathbf{a}^t = \Ac_1 \times ... \times \Ac_N$, which induces a transition in the environment according to the state transition function
\begin{equation}
\mathcal{P}(s^{t+1}|s^t,\mathbf{a}^t): \St \times \Ac_1 \times ... \times \Ac_N \rightarrow\St.
\label{state_transition_function}
\end{equation}
For simplicity, we assume no order cancellations and changes during the trip, i.e., $a_i^t$ keeps unchanged if agent $i$ is on its way to the destination.

\item $\expR$:
Each agent $i$ obtains rewards $r_i^t$ by a reward function
\begin{equation}
\expR_i(s^t,\mathbf{a}^t): \St \times \Ac_1 \times ... \times \Ac_N \rightarrow \mathbb{R}.
\label{reward_function}
\end{equation}
As described in Section \ref{sec:intro}, we want to maximize the total income by considering both the charge of each order and the potential opportunity of the destination.
The reward is then defined as the combination of driver $i$'s own income $\prescript{0}{}{r}$ from its order choice and the order destination potential $\prescript{1}{}{r}$
, which is determined by all agents' behaviors in the environment.
Considering the credit-assignment problem \cite{Agogino08analyzingand} arises in MARL with many agents, i.e., the contribution of an agent's behavior is drowned by the noise of all the other agents' impact on the reward function, we set each driver's own income instead of the total income of all drivers as the reward.

\begin{figure*}
  \vskip -0.1in
  \centering
	\begin{subfigure}[b]{.47\linewidth}
		\centering
		\includegraphics[height=0.42\columnwidth]{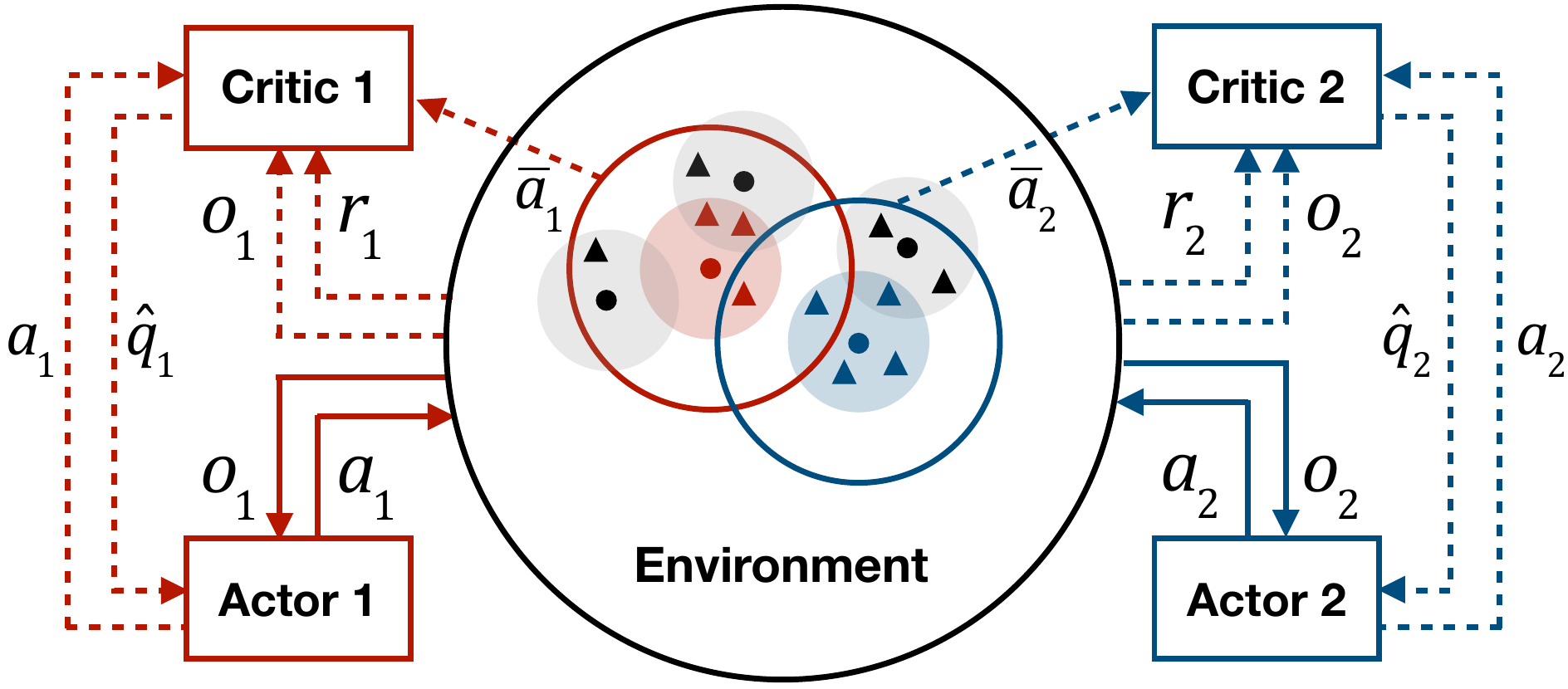}
		\caption{}
		\label{subfig:arc}
	\end{subfigure}
	\begin{subfigure}[b]{.26\linewidth}
		\centering
		\includegraphics[height=0.8\columnwidth]{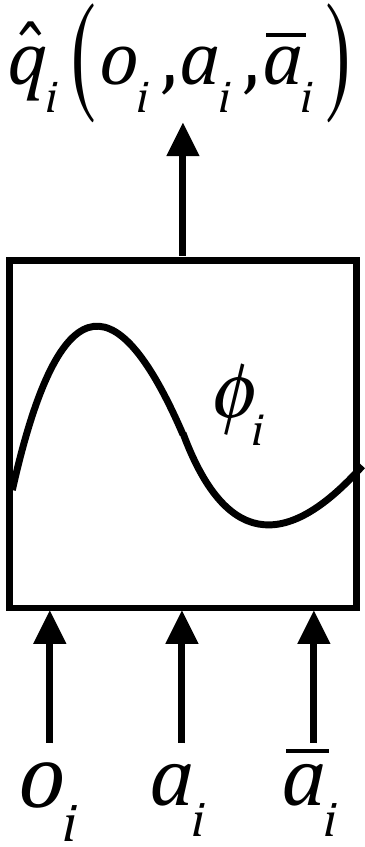}
		\caption{}
		\label{subfig:critic}
	\end{subfigure}
	\begin{subfigure}[b]{.25\linewidth}
		\centering
		\includegraphics[height=0.8\columnwidth]{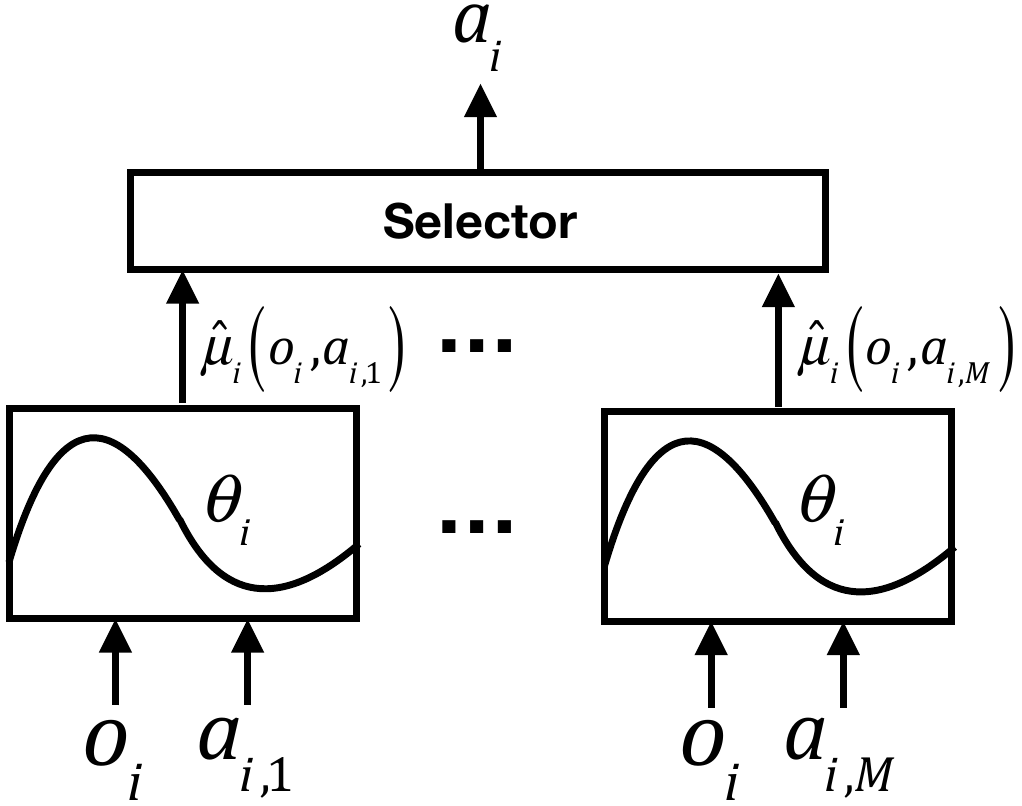}
		\caption{}
		\label{subfig:actor}
	\end{subfigure}
  \caption{Overview of our COD approach.
  (a) The information flow between RL agents and the environment (shown with two agents).
  The centralized authority is only needed during the training stage (dashed arrows) to gather the average response $\bar{a}$;
  agents could behave independently during the execution stage (solid arrows), thus being robust to the "single point of failure".
  (b) and (c) Architectures of the critic and the actor.}
  \label{fig:MAOD}
\end{figure*}

To encourage cooperation between agents and avoid agents' being selfish and greedy, we use the order destination's demand-supply gap as a constraint $\prescript{1}{}{r}$ on the behavior of agents.
Precisely, we compare the demand-supply status between the order origin and destination, and encourage the driver to choose the order destination with a larger demand-supply gap.
The order destination potential (DP) is defined as
\begin{equation}
DP = \#DD - \#DS,
\label{destination_potential}
\end{equation}
where \#DD and \#DS is the demand and the supply of the destination respectively.
We consider the DP only if the number of orders are larger than that of drivers at the origin.
If the order destination has more drivers than orders, we penalize this order with the demand-supply gap at the destination, and vice versa.

To provide better customer experience, we also add the pick-up distance $\prescript{2}{}{r}$ as a regularizer. The ratio of the DP and the pick-up distance to order price are defined as $\prescript{1}{}{\alpha}$ and $\prescript{2}{}{\alpha}$ respectively, i.e., $r_i^t = \prescript{0}{}{r}_i^t + \prescript{1}{}{\alpha}\prescript{1}{}{r}_i^t + \prescript{2}{}{\alpha}\prescript{2}{}{r}_i^t$.
We typically choose the regularization ratio to scale different reward terms into approximately the same range,
although in practice a grid-search could be used to get better performance.
The effectiveness of our reward function setting is empirically verified in Section \ref{subsubsec:grid_result}.

\item $\gamma$: Each agent $i$ aims to maximize its total discounted reward
\begin{equation}
G^t_i = \sum\nolimits_{k=t}^\infty\gamma^{k-t}r_i^k
\label{total_discounted_reward}
\end{equation}
from time step $t$ onwards, where $\gamma \in [0,1]$ is the discount factor.
\end{itemize}

We denote joint quantities over agents in bold, and joint quantities over agents other than a given agent $i$ with the subscript ${-i}$, e.g.,\ $\vect{a} \equiv (\vect{a}_{-i}^t,a_{i}^t)$. To stabilize the training process, we maintain an experience replay buffer $\mathcal{D}$ containing tuples $(\vect{o}^t, \vect{a}^t, \vect{r}^t, \vect{o}^{t+1})$ as described in ~\cite{mnih2015human}.

\subsubsection{Dynamic of the Action Set Elements}
In the order dispatching problem, an action is defined as an active order within a given radius from the agent.
Hence, the content and size of the active order pool for each driver are changing with both the location and time, i.e., the action set for an agent in the order dispatching MDP is changing throughout the training and execution process.
This aspect of order dispatching refrains us from using the $Q$-table to log the $Q$-value because of the potentially infinite size of the action set.
On the other hand, a typical policy network for stochastic actions makes use of a softmax output layer to produce a set of probabilities 
of choosing each action among a fixed set of action candidates, thus is unable to fit into the order dispatching problem.

\subsubsection{Dynamic of the Population Size}
To overcome the non-stationarity of the multi-agent environment, ~\citet{lowe2017multi} uses $Q$-Learning to approximate the discounted reward $G^t_i$, and rewrites the gradient of the expected return for agent $i$ following a deterministic policy $\mu_i$ (parameterized by $\theta_i$) as
\begin{equation}\label{eqn:pg_MAAC}
\nabla_{\theta_i} J(\theta_i) = \Expect_{\vect{o},\vect{a}\sim \mathcal{D}} [\nabla_{\theta_i} \mu_i(a_i|o_i) \nabla_{a_i} Q_i (o_i,\vect{a})]. 
\end{equation}
Here $Q_i (o_i,\vect{a})$ is a centralized action-value function that takes as input the observation $o_i$ of agent $i$ and the joint actions $\vect{a}$ of all agents, and outputs the $Q$-value for agent $i$. However, as off-line
drivers cannot participate in the order dispatching procedure, the number of agents $N$ in the environment is changing over time.
Also, in a typical order dispatching task, which involves thousands of agents, the high dynamics of interactions between a large number of agents is intractable.
Thus, a naive concatenation of all other agents' actions cannot form a valid input for the value network and is not applicable to the order dispatching task.

\subsection{Independent Order Dispatching}
\label{subsec:iod}
To solve the order dispatching MDP, we first propose the independent order dispatching algorithm (IOD), a straightforward MARL approach with the independent $Q$-learning.
We provide each learner with the actor-critic model, which is a popular form of policy gradient (PG) method.
For each agent $i$, PG works by directly adjusting the parameters $\theta_i$ of the policy $\mu_i$ to maximize the objective
$J{(\theta_i)} = \Expect_{\vect{o},\vect{a}\sim \mathcal{D}} \left[G^t_i\right]$
by taking steps in the direction of
$\nabla_{\theta_i} J(\theta_i)$. 
In MARL, independent actor-critic uses the action-value function $Q_i (o_i,a_i)$ to approximate the discounted reward $G^t_i$ by $Q$-Learning \cite{watkins1992q}. 
Here we use temporal-difference learning \cite{sutton1988learning} to approximate the true $Q_i (o_i,a_i)$, leading to a variety of actor-critic algorithms with $Q_i (o_i,a_i)$ called as the critic and $\mu_i$ called as the actor.

To solve the problem of variable action sets, we use a policy network with both observation and action embeddings as the input, derived from the in-action approximation methods (shown in Fig. \ref{subfig:critic}).
As illustrated in Fig. \ref{subfig:actor},
we use a deterministic policy $\mu_{{\ap}_i}$ (denoted by $\mu_{{\ap}_i}: \mathcal{O}\times\Ac\rightarrow \mathbb{R}^*$, abbreviated as $\mu_i$) to generate ranking values of each observation-action pair $(o_i,a_{i,m})$ for each of the $M_i$ candidates within agent $i$'s action set $\Ac_i$.
To choose an action $a_i$, these values are then fed into a Boltzmann softmax selector
\begin{equation}
\pi_{i}(a_{i,j} | o_{i}) = \frac{\exp{\beta \mu_{i}(o_i, a_{i,j})}}
{\sum_{m=0}^{M_i}{\exp{\beta \mu_{i}(o_i, a_{i,m})}}}~, \quad \text{for } j = 1,\dots,M_i
\label{boltzman_selector}
\end{equation}
where $\beta$ is the temperature to control the exploration rate.
Note that a typical policy network with out-action approximation is equivalent to this approach, where we can use an $M_i$-dimension one-hot vector as the embedding to feed into the policy network.
The main difference between these approaches is the execution efficiency, as we need exactly $M_i$ forward passes in a single execution step.
Meanwhile, using order features naturally provide us with an informative form of embeddings.
As each order is parameterized by the concatenation of the normalized vector representation of its origin $ori_{i,m}$ and destination $des_{i,m}$, i.e., $a_{i,m} \equiv [ori_{i,m}, des_{i,m}]$.
Similar orders will be close to each other in the vector space and produce similar outputs from the policy network, which improves the generalization ability of the algorithm.

In IOD, each critic takes input the observation embedding $o_i$ by combining
agent $i$'s location and the timestamp $t$.
The action embedding is built with the vector representation of the order destination and the distance between the driver location and the order origin.
The critic is a DQN \cite{mnih2015human} using neural network function approximations to learn the action-value function $Q_\phi$ (parameterized by $\phi$, abbreviated as $Q_i$ for each agent $i$) by minimizing the loss
\begin{equation} \label{eqn:critic}
\mathcal{L}(\phi_i) = \mathbb{E}_{\vect{o},\vect{a},\vect{r},\vect{o}'}[(y-Q_i(o_i, a_i))^2], \;\;
y=r_i+\gamma\, Q^-_i({o}_i',\mu_i^-(o_i', a_i')),
\end{equation}
where $Q^-_i$ is the target network for the action-value function, and $\mu^-_i$ is the target network for the deterministic policy.
These earlier snapshots of parameters are periodically updated with the most recent network weights and help increase learning stability by decorrelating predicted and target $Q$-values and deterministic policy values.

Following \citet{silver2014deterministic}, the gradient of the expected return for agent $i$ following a deterministic policy $\mu_i$ is
\begin{equation}\label{eqn:pg_MAOD}
\nabla_{\theta_i} J(\theta_i) = \Expect_{\vect{o},\vect{a}\sim \mathcal{D}} [\nabla_{\theta_i} \mu_i(o_i, a_i) \nabla_{a_i} Q_i (o_i,a_i)].
\end{equation}
Here $Q_i (o_i,a_i)$ is an action-value function that takes as input the observation $o_i$ and action $a_i$, and outputs the $Q$-value for agent $i$.

\begin{algorithm}
	\caption{Cooperative Order Dispatching (COD)}\label{algo:cod}
 \begin{algorithmic}
	\State Initialize $Q_{\phi_i}$, $Q_{\phi_i}^-$, $\mu_{\theta_i}$, and $\mu_{\theta_i}^-$ for all $i \in \{1,\dots,N\}$
	\While{training not finished}
	\State For each agent $i$, sample action $a_i$ using the Boltzmann softmax selector $\pi_i(a_i|o_i)$ from Eq.~\eqref{boltzman_selector}
	\State Take the joint action $\vs{a} = [a_1,\dots,a_N]$ and observe the reward $\vs{r} = [r_1,\dots,r_N]$ and the next observations $\vs{o}'$
	\State Compute the new mean action $\bar{\vs{a}} = [\bar{a}_1,\dots,\bar{a}_N]$
	\State Store $\langle \vs{o},\vs{a},\vs{r},\vs{o}',\bar{\vs{a}} \rangle$ in replay buffer $\mathcal{D}$
	\For{$i=1\mbox{~to~}N$}
	\State Sample $K$ experiences $\langle \vs{o},\vs{a},\vs{r},\vs{o}',\bar{\vs{a}} \rangle$ from $\mathcal{D}$
	\State Update the critic by minimizing the loss from Eq.~\eqref{eqn:mf-ac-critic}
	\State Update the actor using the policy gradient as Eq.~\eqref{eqn:mf-ac-actor}
	\EndFor     
	\State Update the parameters of the target networks for each agent $i$ with updating rates $\tau_{\phi}$ and $\tau_{\theta}$:
	\begin{align*}
		\phi_i^- &\gets \tau_{\phi}\phi_i+(1-\tau_{\phi})\phi_i^-\\
		\theta_i^- &\gets \tau_{\theta}\theta_i+(1-\tau_{\theta})\theta_i^-
	\end{align*}
	\EndWhile
\end{algorithmic}
\end{algorithm}

\subsection{Cooperative Order Dispatching with Mean Field Approximation}
\label{subsec:MFOD}
To fully condition on other agents' policy in the environment with variable population size, we propose to integrate our IOD algorithm with mean field approximations, following the Mean Field Reinforcement Learning (MFRL) \cite{pmlr-v80-yang18d}.
MFRL addresses the scalability issue in the multi-agent reinforcement learning with a large number of agents, where the interactions are approximated pairwise by the interaction between an agent and the average response $\bar{a}$ from a sub-population in the neighborhood.
As this pairwise approximation shadows the exact size of interacting counterparts, the use of mean field approximation can help us model other agents' policies directly in the environment with variable population sizes.

In the order dispatching task, agents are interacting with each other by choosing order destinations with a high demand to optimize the demand-supply gap. 
As illustrated in Fig. \ref{subfig:neighborhood}, the range of the neighborhood is then defined as twice the length of the order receiving radius, because agents within this area have intersections between their action sets and interact with each other.
The average response $\bar{a}_i$ is therefore defined as the number of drivers arriving at the same neighborhood as agent $i$, divided by the number of available orders for agent $i$.
For example, when agent $i$ finishes the order and arrives at the neighborhood in Fig. \ref{subfig:neighborhood} (the central agent), the average response $\bar{a}_i$ is 2/3, as there are two agents within the neighborhood area and three available orders.

The introduction of mean field approximations enables agents to learn with the awareness of interacting counterparts, thus helping to improve the training stability and robustness of agents after training for the order dispatching task.
Note that the average response $\bar{a}$ only serves for the model update; thus the centralized authority is only needed during the training stage.
During the execution stage, agents could behave in a fully distributed manner, thus being robust to the "single point of failure".

We propose the cooperative order dispatching algorithm (COD)
as illustrated in Fig. \ref{subfig:arc},
and present the pseudo code for COD in Algorithm \ref{algo:cod}. Each critic is trained by minimizing the loss
\begin{equation}\label{eqn:mf-ac-critic}
\mathcal{L}(\phi_i)=\mathbb{E}_{\vect{o},\vect{a},\vect{r},\vect{o}'}[(r_i+\gamma\, v^{\mf}_{i^-}(o_i') - Q_i(o_i, (\bar{a}_i, a_i)))^2],
\end{equation}
where $v^{\mf}_{i^-}(o_i')$ is the mean field value function for the target network $Q_i^-$ and $\mu_i^-$ (shown as the Boltzmann selector $\pi_i^-$ from Eq. \ref{boltzman_selector})
\begin{equation}
\label{mfv}
v^{\mf}_{i^-}(o_i') = \sum\nolimits_{a_i'} \pi_i^-(a_i' | o_i') \mathbb{E}_{\bar{a}_i'
(\vs{a}'_{-i})\sim\vs{\pi}_{-i}^-
} [ Q_i^- ( o_i', (\bar{a}'_i, a'_i) )],
\end{equation}
and $\bar{a}_i$ is the average response within agent $i$'s neighborhood. The actor of COD learns the optimal policy by using the policy gradient:
\begin{equation}\label{eqn:mf-ac-actor}
\nabla_{\theta_i} J(\theta_i) = \Expect_{\vect{o},\vect{a}\sim \mathcal{D}} [\nabla_{\theta_i} \mu_i(o_i, a_i) \nabla_{a_i} Q_i (o_i, (\bar{a}_i, a_i))]. 
\end{equation}

In the current decision process, active agents sharing duplicated order receiving areas (e.g., the central and upper agent in Fig. \ref{subfig:neighborhood}) might select the same order following their own strategy.
Such collisions could lead to invalid order assignment and force both drivers and customers to wait for a certain period, which equals to the time interval between each dispatching iteration.
Observe that the time interval of decision-making also influences the performance of dispatching; a too long interval will affect the passenger experience, while a too short interval without enough order candidates is not conducive to the decision making.
To solve this problem, our approach works in a fully distributed manner with asynchronous dispatching strategy,
allowing agents have different decision time interval for individual states,
i.e., agents assigned with invalid orders could immediately re-choose a new order from updated candidates pool.

To theoretically support the efficacy of our proposed COD algorithm, we provide the convergence proof of MFRL with function approximations as shown below.

\subsection{Convergence of Mean Field Reinforcement Learning with Function Approximations}

Inspired by the previous proof of MFRL convergence in a tabular $Q$-function setting \cite{pmlr-v80-yang18d}, we further develop the proof towards the converge when the $Q$-function is represented by other function approximators. In addition to the Markov Game setting in Section \ref{subsec:formation}, let $\mathcal{Q}=\{\vs{Q}_\phi \}$ be a family of real-valued functions defined on $\mathcal{S}\times \mathcal{A} \times \bar{\mathcal{A}}$, where $\bar{\mathcal{A}}$ is the action space for the mean actions computed from the neighbors. For simplicity, we assume the environment is a fully observable MDP $\Gamma$, i.e., each agent $i \in \mathcal{I} \equiv \{1,...,N\}$ can observe the global state $s$ instead of the local observation $o_i$.

Assuming that the function class is linearly parameterized, for each agent $i$, the $Q$-function can be expressed as the linear span of a fixed set of $P$ linearly independent functions $\omega^p_i: \mathcal{S}\times \mathcal{A} \times \bar{\mathcal{A}} \rightarrow \mathbb{R}$. Given the parameter vector $\phi_i \in \mathbb{R}^P$, the function $Q_{\phi_i}$ (abbreviated as $Q_i$) is thus defined as
\begin{equation}
Q_i(s, (\bar{a}_i, a_i))=\sum_{p=1}^P \omega_i^p(s, (\bar{a}_i, a_i)) \phi_i(p) = \omega_i(s, (\bar{a}_i, a_i))^\top \phi_i.
\end{equation}
In the function approximation setting, we apply the update rules:
\begin{equation}
\phi_i'
= \phi_i + \alpha \Delta   \nabla_{\phi_i}{Q_i(s, (\bar{a}_i, a_i))}
= \phi_i + \alpha \Delta \omega_i(s, (\bar{a}_i, a_i)) \label{itersteps},
\end{equation}
where $\Delta$ is the temporal difference:
\begin{align}
\Delta
\nonumber &=  r_i+\gamma\, v^{\mf}_i(s') - {Q_i(s, (\bar{a}_i, a_i))} \\
 &= r_i+\gamma\, \mathbb{E}_{\vs{a} \sim \vs{\pi}^{\mf}} [ Q_i(s', (\bar{a}_i, a_i)) ] - Q_i(s, (\bar{a}_i, a_i)).
\end{align}
Our goal is to derive the parameter vector $\vs\phi = \{\phi_i \}$ such that ${\vs\omega}^{\top} \vs\phi$  approximates the (local) Nash $Q$-values.
Under the main assumptions and the lemma as introduced below,
\citet{pmlr-v80-yang18d} proved that the policy $\vs{\pi}$ is $\frac{K}{2T}$ Lipschitz continuous with respect to $\phi$, where $T=1/\beta$ and  $K \geq 0$ is the upper bound of the observed reward. 

\begin{assumption}\label{tableassum}
Each state-action pair is visited infinitely often, and the reward is bounded by some constant $K$.
\end{assumption}

\begin{assumption}\label{glieassum}
Agent's policy is Greedy in the Limit with Infinite Exploration (GLIE). In the case with the Boltzmann policy, the policy becomes greedy \emph{w.r.t.} the $Q$-function in the limit as the temperature decays asymptotically to zero. 
\end{assumption}

\begin{assumption}\label{2nasheq}
For each stage game $[Q_t^1(s), ..., Q_t^N(s)]$ at time $t$ and in state $s$ in training, for all $t$, $s$, $j\in \{1,\dots,N\}$, 
the Nash equilibrium $\vs{\pi}_* = [\pi^1_*, \dots, \pi^N_*]$ is recognized either as 1) the \emph{global optimum} or 2) a \emph{saddle point} expressed as:
\begin{itemize}
\item[1.] $\mathbb{E}_{\vs{\pi}_*} [Q_t^j(s)] \ge \mathbb{E}_{\vs{\pi}} [Q_t^j(s)],\ \forall\vs{\pi} \in \Omega\big(\prod_k \mathcal{A}^k\big)$;
\item[2.] $\mathbb{E}_{\vs{\pi}_*} [Q_t^j(s)] \ge \mathbb{E}_{\pi^j}\mathbb{E}_{\vs{\pi}^{-j}_*} [Q_t^j(s)],\ \forall\pi^j \in \Omega\big(\mathcal{A}^j\big)$ and\\$\mathbb{E}_{\vs{\pi}_*} [Q_t^j(s)] \le \mathbb{E}_{\pi^j_*}\mathbb{E}_{\vs{\pi}^{-j}} [Q_t^j(s)],\ \forall\vs{\pi}^{-j} \in \Omega\big(\prod_{k\ne j}\mathcal{A}^k\big)$.
\end{itemize}
\end{assumption}

\begin{lemma}\label{fundamental}
The random process $\{\Delta_t\}$ defined in $\mathbb{R}$ as
\begin{equation*}
\Delta_{t+1}(x) = (1 - \alpha_t(x))\Delta_{t}(x) + \alpha_t(x)F_t(x)
\label{lemma1}
\end{equation*}
converges to zero with probability $1$ (\emph{w.p.$1$}) when
\begin{itemize}
\item[1.] $0\leq \alpha_t(x) \leq 1$, $\sum_t \alpha_t(x) = \infty$, $\sum_t \alpha_t^2(x) < \infty$;
\item[2.] $x \in \mathcal{X}$, the set of possible states, and $|\mathcal{X}| < \infty$;
\item[3.] $\| \mathbb{E}[F_t(x) | \mathcal{F}_t ] \|_{W} \leq \gamma \|\Delta_t\|_W + c_t$, where $\gamma \in [0,1)$ and $c_t$ converges to zero \emph{w.p.$1$};
\item[4.] $\mathbf{var}[F_t(x) | \mathcal{F}_t ] \leq K(1+\|\Delta_t\|_{W}^2)$ with constant $K>0$.
\end{itemize}
Here $\mathcal{F}_t$ denotes the filtration of an increasing sequence of $\sigma$-fields including the history of processes; $\alpha_t, \Delta_t, F_t \in \mathcal{F}_t$ and $\|\cdot\|_W$ is a weighted maximum norm \cite{bertsekas2012weighted}.
\end{lemma}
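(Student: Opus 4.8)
This lemma is the classical stochastic-approximation result of Dvoretzky / Jaakkola--Jordan--Singh type, so the plan is to prove convergence by separating the contracting ``signal'' part of the update from the zero-mean ``noise'' part, controlling each in the weighted max-norm, and then recombining. Concretely, I would first split $F_t(x) = \mathbb{E}[F_t(x)\mid\mathcal{F}_t] + u_t(x)$, where $u_t(x) := F_t(x) - \mathbb{E}[F_t(x)\mid\mathcal{F}_t]$ is a martingale-difference sequence adapted to $\{\mathcal{F}_t\}$, so that $\mathbb{E}[u_t(x)\mid\mathcal{F}_t]=0$ and, by condition~4, $\mathbf{var}[u_t(x)\mid\mathcal{F}_t]\le K(1+\|\Delta_t\|_W^2)$. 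Since condition~2 provides a finite state set $\mathcal{X}$, it is enough to track the iteration componentwise and then pass to $\|\cdot\|_W$ by taking a maximum over the finitely many $x$.

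Next I would introduce the auxiliary noise process $U_{t+1}(x) = (1-\alpha_t(x))U_t(x) + \alpha_t(x)u_t(x)$ with $U_0\equiv 0$, together with the residual $V_t := \Delta_t - U_t$, which satisfies $V_{t+1}(x) = (1-\alpha_t(x))V_t(x) + \alpha_t(x)\,\mathbb{E}[F_t(x)\mid\mathcal{F}_t]$. The step-size hypotheses of condition~1 ($0\le\alpha_t\le1$, $\sum_t\alpha_t=\infty$, $\sum_t\alpha_t^2<\infty$) are exactly what drives a supermartingale-convergence argument showing that the accumulated noise is asymptotically negligible, i.e. $U_t\to 0$ w.p.$1$. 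For the signal process $V_t$, condition~3 supplies the contraction $\|\mathbb{E}[F_t(\cdot)\mid\mathcal{F}_t]\|_W \le \gamma\|\Delta_t\|_W + c_t$ with $\gamma<1$ and $c_t\to0$; substituting $\Delta_t = V_t + U_t$ yields $\|\mathbb{E}[F_t(\cdot)\mid\mathcal{F}_t]\|_W \le \gamma(\|V_t\|_W+\|U_t\|_W)+c_t$, and an iterated (epoch-by-epoch) argument then shows that $\limsup_t\|V_t\|_W$ is shrunk by a factor $\gamma$ per epoch and hence vanishes. Combining the two pieces through $\|\Delta_t\|_W \le \|V_t\|_W + \|U_t\|_W \to 0$ gives the claim.

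The main obstacle is the circular dependence between the two halves of the argument: the noise bound in condition~4 scales with $\|\Delta_t\|_W^2$, so proving $U_t\to0$ presupposes that $\{\Delta_t\}$ stays bounded, whereas boundedness of $\{\Delta_t\}$ is itself what the contraction-plus-noise control is meant to deliver. I would break this circularity by a bootstrapping step that first establishes almost-sure boundedness of $\{\Delta_t\}$ directly: the deterministic part is a genuine $\gamma$-contraction (up to the vanishing $c_t$) and therefore cannot inflate the iterates, while $\sum_t\alpha_t^2<\infty$ forces the cumulative variance injected by the noise to grow only sublinearly, so the iterates cannot escape to infinity. Once a.s. boundedness is secured, condition~4 gives a uniform (random but finite) variance bound, the martingale argument for $U_t\to0$ closes unconditionally, and the iterated contraction on $V_t$ completes the proof. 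Finally, I would note that applying the lemma with $\Delta_t$ taken as the error between the current linear estimate ${\vs{\omega}}^{\top}\vs{\phi}$ and the Nash $Q$-values, and $F_t$ the mean-field temporal-difference increment of~\eqref{itersteps}, recovers convergence of the update to the desired fixed point: Assumption~\ref{tableassum} supplies condition~1, the finiteness and reward bound, Assumption~\ref{glieassum} gives the greedy limit needed for the policy term, and Assumption~\ref{2nasheq} makes the mean-field Nash operator a contraction, which is precisely condition~3.
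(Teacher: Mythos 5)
The paper itself does not prove this lemma: its ``proof'' is a two-line pointer to Theorem 1 of Jaakkola, Jordan and Singh (1994) and Corollary 5 of Szepesv\'ari and Littman (1999). Your proposal is, in architecture, a faithful reconstruction of exactly that classical argument: the martingale-difference split $F_t(x)=\mathbb{E}[F_t(x)\mid\mathcal{F}_t]+u_t(x)$, the auxiliary pure-noise process $U_t$ and residual $V_t=\Delta_t-U_t$, a martingale/supermartingale argument killing $U_t$ under $\sum_t\alpha_t^2(x)<\infty$, and an epoch-by-epoch contraction (needed because the step sizes $\alpha_t(x)$ are componentwise) driving $\limsup_t\|V_t\|_W$ to zero via $\limsup\|V_t\|_W\le\gamma\limsup\|V_t\|_W$. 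You also correctly identify the one genuinely delicate point: the circular dependence between the variance bound $K(1+\|\Delta_t\|_W^2)$ and almost-sure boundedness of $\{\Delta_t\}$.

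However, your bootstrapping step for boundedness is not a proof as stated, and as written it would fail. The claims that the contraction ``cannot inflate the iterates'' and that $\sum_t\alpha_t^2<\infty$ makes the injected variance ``grow only sublinearly'' do not combine: the noise added at step $t$ has conditional variance $\alpha_t^2(x)\,K(1+\|\Delta_t\|_W^2)$, so if $\|\Delta_t\|_W$ were drifting upward the variance would grow quadratically with it, and the very circularity you set out to break re-enters. The device used in the cited references is \emph{scale invariance}: conditions 3 and 4 are homogeneous in $(\Delta_t,F_t)$ up to additive constants, so one rescales the process by a fixed factor whenever $\|\Delta_t\|_W$ exceeds a threshold $G$; the rescaled process has uniformly bounded conditional variance, the bounded-variance convergence argument applies to it, and since it then provably re-enters a ball of radius $\gamma G$ plus vanishing terms, rescalings occur only finitely often, which yields a.s.\ boundedness of the original process. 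With that substitution your two-process argument closes and matches the proof the paper defers to. One ancillary correction: your final paragraph misattributes the roles of the hypotheses in the paper's theorem --- Assumption 1 of the paper supplies the reward bound and infinite visitation, while the lemma's step-size condition 1 is assumed separately, and the paper's function-approximation theorem is in fact proved via an ODE stability argument (following Melo et al.) rather than by running this lemma's contraction machinery; this does not affect the validity of your proof of the lemma itself.
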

\begin{proof}
See Theorem 1 in \citet{jaakkola1994convergence} and Corollary 5 in \citet{szepesvari1999unified} for detailed derivation.
We include it here to stay self-contained.
\end{proof}

In contrast to the previous work \citet{pmlr-v80-yang18d}, we establish convergence of Eq.~\eqref{itersteps} by adopting an ordinary differentiable equation (ODE) with a globally asymptotically stable equilibrium point where the trajectories closely follow, following the framework of the convergence proof of single-agent $Q$-learning with function approximation \cite{melo2008analysis}.

\begin{theorem}
Given the MDP $\Gamma$, $\{\vs\omega_p, p =1, ..., P \}$, and the learning policy  $\vs{\pi}$ that is $\frac{K}{2T}$ Lipschitz continuous with respect to $\phi$, if the Assumptions \ref{tableassum}, \ref{glieassum} \& \ref{2nasheq}, and Lemma \ref{fundamental}'s first and second conditions
are met, then there exists $C_0$ such that the algorithm in Eq.~\eqref{itersteps} converges w.p.1 if $\frac{K}{2T} < C_0$.
\end{theorem}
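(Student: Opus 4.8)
The plan is to follow the ODE (ordinary differential equation) method for stochastic approximation, as in \citet{melo2008analysis}, and to isolate the single place where the mean field policy's dependence on $\phi$ enters. First I would rewrite the expected increment of the iteration in Eq.~\eqref{itersteps}: letting $g(\phi)$ denote the expectation of $\Delta\,\omega_i(s,(\bar{a}_i,a_i))$ taken over the stationary distribution of the underlying Markov chain (which exists and is fully explored by Assumption~\ref{tableassum}), the update becomes a Robbins--Monro scheme $\phi' = \phi + \alpha\big(g(\phi) + \text{noise}\big)$, and I would associate with it the ODE $\dot{\phi} = g(\phi)$. The goal is then to show that $\phi^*$, the fixed point corresponding to the (local) Nash $Q$-values, is a globally asymptotically stable equilibrium of this ODE; by the standard stochastic-approximation theorem the iterates then track the ODE and converge to $\phi^*$ w.p.$1$.

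Next I would verify the regularity hypotheses that license the ODE method. The step-size requirements are exactly Lemma~\ref{fundamental}'s first condition, and finiteness of the state set is its second condition, both taken as standing assumptions. Boundedness of the reward (Assumption~\ref{tableassum}) together with the linear parameterization gives a uniform bound on the martingale-difference noise, so its conditional variance is controlled in the spirit of Lemma~\ref{fundamental}'s fourth condition. The field $g$ is Lipschitz in $\phi$ because the features $\omega_i$ are fixed and bounded and, crucially, the Boltzmann policy $\vs{\pi}$ is $\frac{K}{2T}$-Lipschitz in $\phi$ (the property carried over from \citet{pmlr-v80-yang18d}); this same Lipschitz property makes the mean field value $v^{\mf}_i(s')=\Expect_{\vs{a}\sim\vs{\pi}^{\mf}}[Q_i(s',(\bar{a}_i,a_i))]$ smooth in $\phi$, which is what replaces the contraction hypothesis (Lemma~\ref{fundamental}'s third condition) used in the tabular argument.

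The heart of the argument is establishing global asymptotic stability of $\phi^*$, and I would do this with the Lyapunov candidate $V(\phi)=\tfrac12\|\phi-\phi^*\|^2$, showing $\langle \phi-\phi^*,\, g(\phi)\rangle < 0$ away from $\phi^*$. The drift splits into two parts. The first is the usual projected Bellman term, whose contraction margin is governed by the discount $\gamma$ and produces a strictly negative contribution proportional to $-(1-\gamma)$; this is the part that, in the tabular case of \citet{pmlr-v80-yang18d}, suffices on its own. The second part comes from the dependence of the target policy $\vs{\pi}^{\mf}$ on $\phi$ through the Boltzmann selector and the mean action $\bar{a}$; using Assumption~\ref{2nasheq} to identify $\phi^*$ with a Nash point and Assumption~\ref{glieassum} (GLIE) to control the greedy limit, this perturbation is bounded by a constant multiple of the Lipschitz constant $\frac{K}{2T}$ times $\|\phi-\phi^*\|^2$. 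Collecting the two parts yields $\langle \phi-\phi^*, g(\phi)\rangle \le -\big(c_1(1-\gamma) - c_2\tfrac{K}{2T}\big)\|\phi-\phi^*\|^2$ for constants $c_1,c_2>0$ depending on the stationary distribution and the feature Gram matrix. Setting $C_0 = c_1(1-\gamma)/c_2$, the right-hand side is strictly negative whenever $\frac{K}{2T} < C_0$, which is exactly the claimed threshold; the equilibrium is then globally asymptotically stable and the ODE method delivers convergence w.p.$1$.

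The step I expect to be the main obstacle is the second part of the Lyapunov calculation: controlling the policy-variation term. In the tabular setting the mean field Bellman operator is an outright $\gamma$-contraction, so the target policy's movement with $\phi$ is harmless; with linear function approximation the projection step can inflate errors and the $\phi$-dependence of $\vs{\pi}^{\mf}$ introduces a genuinely destabilizing term with no tabular analogue (compare divergence of greedy $Q$-learning under approximation as $T\to 0$). Making this bound explicit --- in particular showing it is linear in $\frac{K}{2T}$ and quadratic in $\|\phi-\phi^*\|$, uniformly over the state space --- is where the Lipschitz continuity of the softmax policy must be used carefully and where the admissible range $C_0$ is actually pinned down.
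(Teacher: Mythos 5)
Your proposal follows essentially the same route as the paper's proof: both cast the update in Eq.~\eqref{itersteps} as an ODE in the style of \citet{melo2008analysis}, study the squared distance $\|\phi-\phi^*\|^2$ to the unique Nash equilibrium guaranteed by Assumption~\ref{2nasheq}, split the drift into a negative-definite part at $\phi^*$ (the paper's $\tilde{\vs\phi}^{\top}\vs{A}_{\phi^*}\tilde{\vs\phi}$, your projected-Bellman term) plus perturbation terms whose size is controlled by the $\frac{K}{2T}$-Lipschitz continuity of the Boltzmann policy (the paper's $\vs{A}_{\phi}-\vs{A}_{\phi^*}$ and $\vs{b}_{\phi}-\vs{b}_{\phi^*}$ terms, your policy-variation term), and obtain the threshold $C_0$ as the point where the stable part dominates. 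Your version is, if anything, slightly more explicit than the paper about where the stability margin comes from and about the fact that negative definiteness of the equilibrium linearization is the delicate step, which the paper simply asserts.
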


\begin{proof}
We first re-write the Eq.~\eqref{itersteps} as on ODE:
\begin{align}
\dfrac{d\vs\phi}{dt}& = \mathbb{E}_{\vs\phi}\left[\vs\omega_s^{\top}\left(\vs{r}(s, \vs\mathbf{a}, s') + \gamma \vs\omega_{s'}^{\top}\vs\phi  - \vs\omega_{s}^{\top}\vs\phi  \right)  \right]	\nonumber \\
& =  \mathbb{E}_{\vs\phi}\left[\vs\omega_s^{\top}(\gamma \vs\omega_{s'}^{\top} - \vs\omega_{s}^{\top} )\right] \vs\phi + \mathbb{E}_{\vs\phi}\left[\vs\omega_s^{\top}(\vs{r}(s, \vs\mathbf{a}, s'))\right] \nonumber \\
& = \vs{A}_{\phi} \vs\phi + \vs{b}_{\phi} 
\label{ode}
\end{align}
Notice that we use a vector for considering the updating rule for the $Q$-function of each agent.
We can easily know that necessity condition of the equilibrium is
$\vs\phi^* = \vs{A}_{\phi^*}^{-1}\vs{b}_{\phi^*}$.
The existence of the such equilibrium has been restricted in the scenario that meets Assumption \ref{2nasheq}.
\citet{pmlr-v80-yang18d} proved that under the Assumption \ref{2nasheq}, the existing equilibrium, either in the form of a global equilibrium or in the form of a saddle-point equilibrium, is unique.

Let $\tilde{\vs\phi} = \vs\phi^t - \vs\phi^*$, we have:
\begin{align}
\dfrac{d}{dt} || \tilde{\vs\phi}||_{2}  & = 2 \vs\phi \cdot \frac{d\vs\phi}{dt} - 2 \vs\phi^*\cdot \frac{d\vs\phi}{dt}  \nonumber \\
&= 2 \tilde{\vs\phi}^{\top}(\vs{A}_{\phi} \vs\phi + \vs{b}_{\phi} -\vs{A}_{\phi^*} \vs\phi^* - \vs{b}_{\phi^*}) \nonumber \\
& = 2 \tilde{\vs\phi}^{\top}(\vs{A}_{\phi^*} \vs\phi - \vs{A}_{\phi^*} \vs\phi + \vs{A}_{\phi} \vs\phi + \vs{b}_{\phi} -\vs{A}_{\phi^*} \vs\phi^* - \vs{b}_{\phi^*})  \nonumber \\
& = 2 \tilde{\vs\phi}^{\top}\vs{A}_{\phi^*} \tilde{\vs\phi} + 2 \tilde{\vs\phi}^{\top}(\vs{A}_{\phi} - \vs{A}_{\phi^*})\vs\phi + 2 \tilde{\vs\phi}^{\top}(\vs{b}_{\phi} - \vs{b}_{\phi^*}) \nonumber \\
& \leq  2 \tilde{\vs\phi}^{\top} \left(\vs{A}_{\phi^*} + \sup_{\vs\phi} || \vs{A}_{\phi} - \vs{A}_{\phi^*}||_2 + \sup_{\vs\phi} \dfrac{||\vs{b}_{\phi} - \vs{b}_\phi^*||_2}{||\vs\phi - \vs\phi^*||_2}   \right) \tilde{\vs\phi}
\label{odefinal}.
\end{align}
As we know that the policy $\vs\pi^t$ is Lipschitz continuous w.r.t $\phi^{t}$, this implies that $\vs{A}_{\phi}$ and $\vs{b}_{\phi}$ are also Lipschitz continuous w.r.t to $\phi$. In other words, if $\frac{K}{2T} \leq C_0$  is sufficiently small and close to zero, 
then the norm term of $\left(\sup_{\vs\phi} || \vs{A}_{\phi} - \vs{A}_{\phi^*}||_2 + \sup_{\vs\phi} \dfrac{||\vs{b}_{\phi} - \vs{b}_\phi^*||_2}{||\vs\phi - \vs\phi^*||_2}\right)$ goes to zero. Considering near the equilibrium point $\phi^*$, $\vs{A}_{\phi^*}$ is a negative definite matrix, the Eq.~\eqref{odefinal} tends to be negative definite as well, so the ODE in Eq.\eqref{ode} is globally asymptotically stable and the conclusion of the theorem follows.
\end{proof}

While in practice, we might break the linear condition by the use of nonlinear activation functions, the Lipschitz continuity will still hold as long as the nonlinear add-on is limited to a small scale.

\section{Experiment}
To support the training and evaluation of our MARL algorithm, we adopt two simulators with a grid-based map and a coordinate-based map respectively.
The main difference between these two simulators is the design of the driver pick-up module, i.e., the process of a driver reaching the origin of the assigned order.
In the grid-based simulator (introduced by \citet{lin2018efficient}), the location state for each driver and order is represented by a grid ID.
Hence, the exact coordinate for each instance inside a grid is shadowed by the state representation.
This simplified setting ensures there will be no pick-up distance (or arriving time) difference inside a grid; it also brings an assumption of no cancellations before driver pick-up.
Whereas in the coordinate-based simulator, the location of each driver and order instance is represented by a two-value vector from the Geographic Coordinate System, and the cancellation before pick-up is also taken into account.
In this setting, taking an order within an appropriate pick-up distance is crucial to each driver, as the order may be canceled if the driver takes too long time to arrive.
We present experiment details of the grid-based simulator in Section \ref{subsec:grid-based}, and the coordinate-based simulator in Section \ref{subsec:coord-based}.

\begin{figure}[t]
  \centering
  \includegraphics[height=0.35\columnwidth]{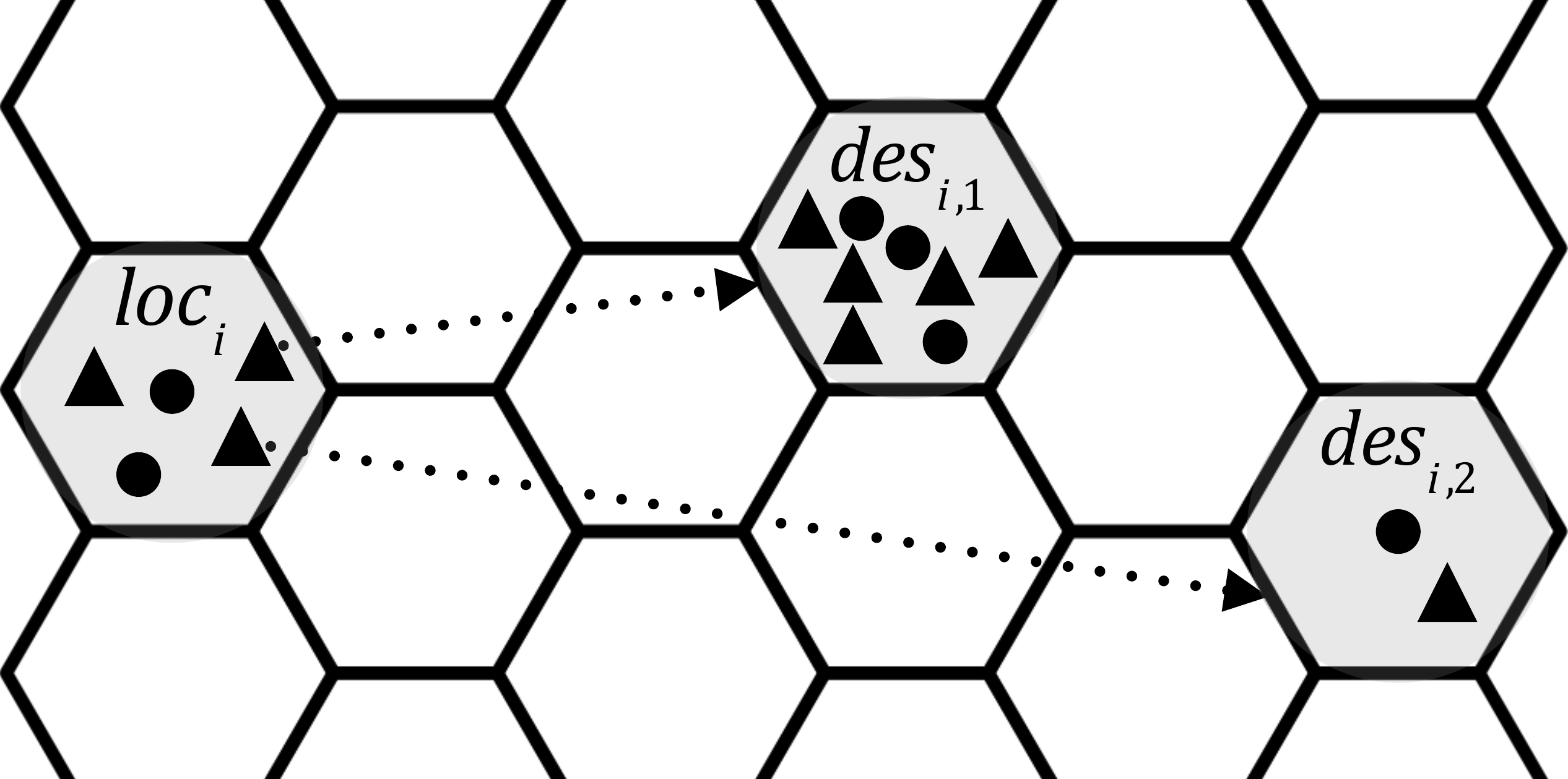}
  \caption{Illustration of the grid-based simulator. All location features are represented by the corresponding grid ID, where the order origin $ori_{i,m}$ is the same as the driver location $loc_i$.}
  \label{fig:grid_simulator}
\end{figure}

\subsection{Grid-based Experiment}
\label{subsec:grid-based}
\subsubsection{Environment Setting}
\label{subsubsec:grid-based-setting}
In the grid-based simulator, the city is covered by a hexagonal grid-world as illustrated in Fig. \ref{fig:grid_simulator}.
At each simulation time step $t$, the simulator provides an observation $\vect{o}^t$ with a set of active drivers and a set of available orders.
Each order feature includes the origin grid ID and the destination grid ID, while each driver has the grid ID as the location feature $loc_i$.
Drivers are regarded as homogeneous and can switch between online (active) and offline via a random process learned from the history data.
As the travel distance between neighboring grids is approximately $1.2$ kilometers and the time step interval $t_\triangle$ is $10$ minutes,
we assume that drivers will not move to other grids before taking a new order, and define the order receiving area and the neighborhood as the grid where the agent stays.
The order dispatching algorithm then generates an optimal list of driver-order pairs $\vect{a}^t$ for the current policy,
where $a_i$ is an available order $a_{i,m}$ selected from the order candidate pool $\Ac_i$.
In the grid-based setting, the origin of each order is already embedded as the location feature $loc_i$ in $o_i$, thus $a_{i,m}$ is parameterized by the destination grid ID $des_{i,m}$.
After receiving the driver-order pairs from the algorithm, the simulator will then return a new observation $\vect{o}^{t+1}$ and a list of order fees.
Stepping on this new observation, the order dispatching algorithm will calculate a set of rewards $r^t$ for each agent, store the record $(\vect{o}^t, \vect{a}^t, \vect{r}^t, \vect{o}^{t+1})$ to replay buffer, and update the network parameters with respect to a batch of samples from replay buffer.

The data source of this simulator (provided by DiDi Chuxing) includes order information and trajectories of vehicles in three weeks.
Available orders are generated by bootstrapping from real orders occurred in the same period during the day given a bootstrapping ratio $p_{sample}$.
More concretely, suppose the simulation time step interval is $t_\triangle$, at each simulation time step $t$, we randomly sample $p_{sample}$ orders with replacement from real orders happened between $t_\triangle * t$ to $t_\triangle * (t + 1)$.
Also, drivers are set between online and offline following a distribution learned from real data using a maximum likelihood estimation.
On average, the simulator has $7K$ drivers and $136K$ dispatching events per time step.

The effectiveness of the grid-based simulator is evaluated by \citet{lin2018efficient} using the calibration against the real data regarding the most important performance measurement: the gross merchandise volume (GMV).
The coefficient of determination $r^2$ between simulated GMV and real GMV is $0.9331$ and the Pearson correlation is $0.9853$ with $p$-value $p < 0.00001$.

\subsubsection{Model Setting}
\label{grid-based-setting}
We use the grid-based simulator to compare the performance of following methods.

\leftmargini=4mm
\begin{itemize}
\item \textbf{Random (RAN)}: The random dispatching algorithm considers no additional information. It only assigns all active drivers with an available order at each time step.

\item \textbf{Response-based (RES)}: This response-based method aims to achieve higher order response rate by assigning drivers to short duration orders.
During each time step, all available orders starting from the same grid will be sorted by the estimated trip time.
Multiple orders with the same expected duration will be further sorted by the order price to balance the performance.

\item \textbf{Revenue-based (REV)}: The revenue-based algorithm focuses on a higher GMV.
Orders with higher prices will be given priority to get dispatched first.
Following the similar principle as described above, orders with shorter estimated trip time will be assigned first if multiple orders have the same price.

\item \textbf{IOD}: The independent order dispatching algorithm as described in Section \ref{subsec:iod}.
The action-value function approximation (i.e., the $Q$-network) is parameterized by an MLP with four hidden layers (512, 256, 128, 64) and the policy network is parameterized by an MLP with three hidden layers (256, 128, 64).
We use the ReLU \cite{nair2010rectified} activation between hidden layers, and transform the final linear output of $Q$-network and policy network with ReLU and sigmoid function respectively.
To find an optimal parameter setting, we use the Adam Optimizer \cite{kingma2014adam} with a learning rate of $0.0001$ for the critic and $0.001$ for the actor.
The discounted factor $\gamma$ is $0.95$, and the batch size is $2048$.
We update the network parameters after every $3K$ samples are added to the replay buffer (capacity $5*10^5$).
We use a Boltzmann softmax selector for all MARL methods and set the initial temperature as $1.0$, then gradually reduce the temperature until $0.01$ to limit exploration.

\item \textbf{COD}: Our proposed cooperative order dispatching algorithm with mean field approximation as described in Section \ref{subsec:MFOD}.
The network architecture is identical to the one described in IOD, except a mean action is fed as another input to the critic network as illustrated in Fig. \ref{subfig:critic}.
\end{itemize}

As described in the reward setting in Section \ref{subsec:formation},
we set the regularization ratio $\prescript{1}{}{\alpha}=0.01$ for DP and $\prescript{2}{}{\alpha}=0$ for the order waiting time penalty as we don't consider the pick-up distance in the grid-based experiment.
Because of our homogeneous agent setting, all agents share the same $Q$-network and policy network for efficient training.
During the execution in the real-world environment, each agent can keep its copy of policy parameters and receive updates periodically from a parameter server.

\begin{table}
  \setlength{\tabcolsep}{2.8pt}
  \caption{Performance comparison regarding the normalized Gross Merchandise Volume (GMV) on the test set with respect to the performance of RAN.}
  \label{table:grid_gmv}
  \centering
  \begin{tabular}{cccc}
    \toprule
    $p_{sample}$ & 100\% & 50\% & 10\% \\
    \midrule
    RES
    & $-4.60 \pm 0.01$
    & $-4.65 \pm 0.04$
    & $-4.71 \pm 0.03$ \\
    REV
    & $+4.78 \pm 0.00$
    & $+4.87 \pm 0.02$
    & $+5.24 \pm 0.07$ \\
    IOD
    & $+6.27 \pm 0.01$
    & $+6.21 \pm 0.03$
    & $+5.73 \pm 0.07$ \\
    COD
    & $\textbf{+7.78} \pm \textbf{0.01}$
    & $\textbf{+7.76} \pm \textbf{0.03}$
    & $\textbf{+7.52} \pm \textbf{0.14}$ \\
    \bottomrule
  \end{tabular}
\end{table}

\subsubsection{Result Analysis}
\label{subsubsec:grid_result}
For all learning methods, we run $20$ episodes for training, store the trained model periodically, and conduct the evaluation on the stored model with the best training performance.
The training set is generated by bootstrapping 50\% of the original real orders unless specified otherwise.
We use five random seeds for testing and present the averaged result.
We compare the performance of different methods by three metrics, including the total income in a day (GMV), the order response rate (ORR), and the average order destination potential (ADP).
ORR is the number of orders taken divided by the number of orders generated, and ADP is the sum of destination potential (as described in Section \ref{subsec:formation}) of all orders divided by the number of orders taken.

\paragraph{Gross Merchandise Volume}
As shown in Table \ref{table:grid_gmv}, the performance of COD largely surpasses all rule-based methods and IOD in GMV metric.
RES suffers from lowest GMV among all methods due to its preference of short distance trips with lower average order value.
On the other hand, REV aims to pick higher value orders with longer trip time, thus enjoying a higher GMV.
However, both RES and REV cannot find a balance between getting higher income per order and taking more orders, while RAN falls into a sub-optimal trade-off without favoring either side.
Instead, our proposed MARL methods (IOD and COD) achieve higher growths in terms of GMV by considering each order's price and the destination potential concurrently.
Orders with relatively low destination potential will be less possible to get picked, thus avoiding harming GMV by preventing the driver from trapping in areas with very few future orders.
By direct modeling other agents' policies and capturing the interaction between agents in the environment, the COD algorithm with mean field approximation gives the best performance among all comparing methods.

\begin{table}
  \setlength{\tabcolsep}{2.8pt}
  \caption{Performance comparison regarding the order response rate (OOR) on the test set. The percentage difference shown for all methods is with respect to RAN.}
  \label{table:grid_oor}
  \centering
  \begin{tabular}{cccc}
    \toprule
    $p_{sample}$ & 100\% & 50\% & 10\% \\
    \midrule
    RES
    & $+3.15 \pm 0.01$ & $+3.11 \pm 0.02$ & $+2.59 \pm 0.04$ \\
    REV
    & $-6.21 \pm 0.00$ & $-6.14 \pm 0.03$ & $-5.22 \pm 0.06$ \\
    IOD
    & $+4.67 \pm 0.01$ & $+4.68 \pm 0.03$ & $+4.53 \pm 0.04$ \\
    COD
    & $\textbf{+5.35} \pm \textbf{0.00}$ & $\textbf{+5.38} \pm \textbf{0.03}$ & $\textbf{+5.42} \pm \textbf{0.09}$ \\
    \bottomrule
  \end{tabular}
\end{table}

\begin{figure}
  \centering
	\begin{subfigure}[b]{.49\linewidth}
		\centering
		\includegraphics[height=0.73\columnwidth]{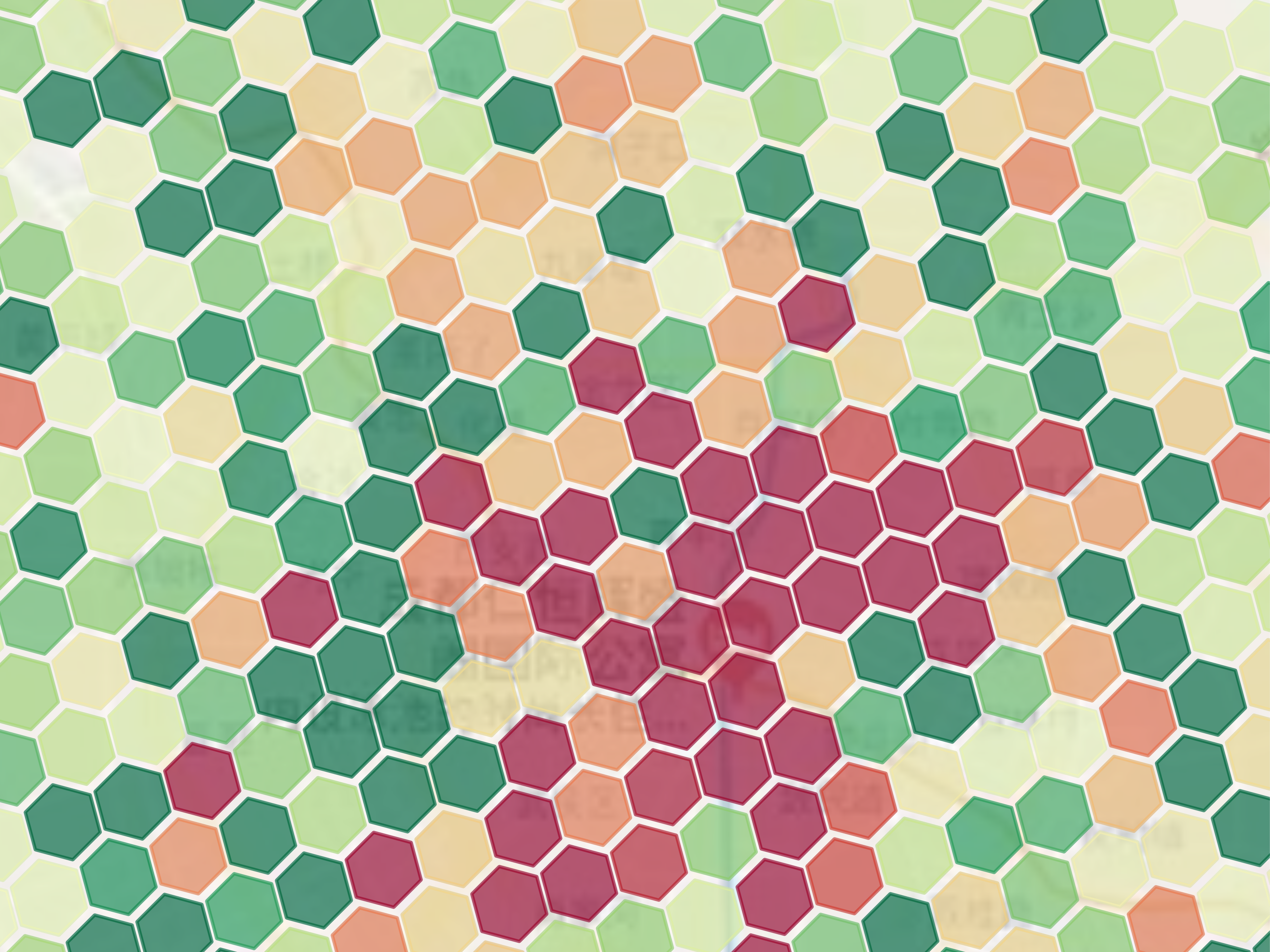}
		\caption{COD}
		\label{subfig:sd_cod}
	\end{subfigure}
	\begin{subfigure}[b]{.49\linewidth}
		\centering
		\includegraphics[height=0.73\columnwidth]{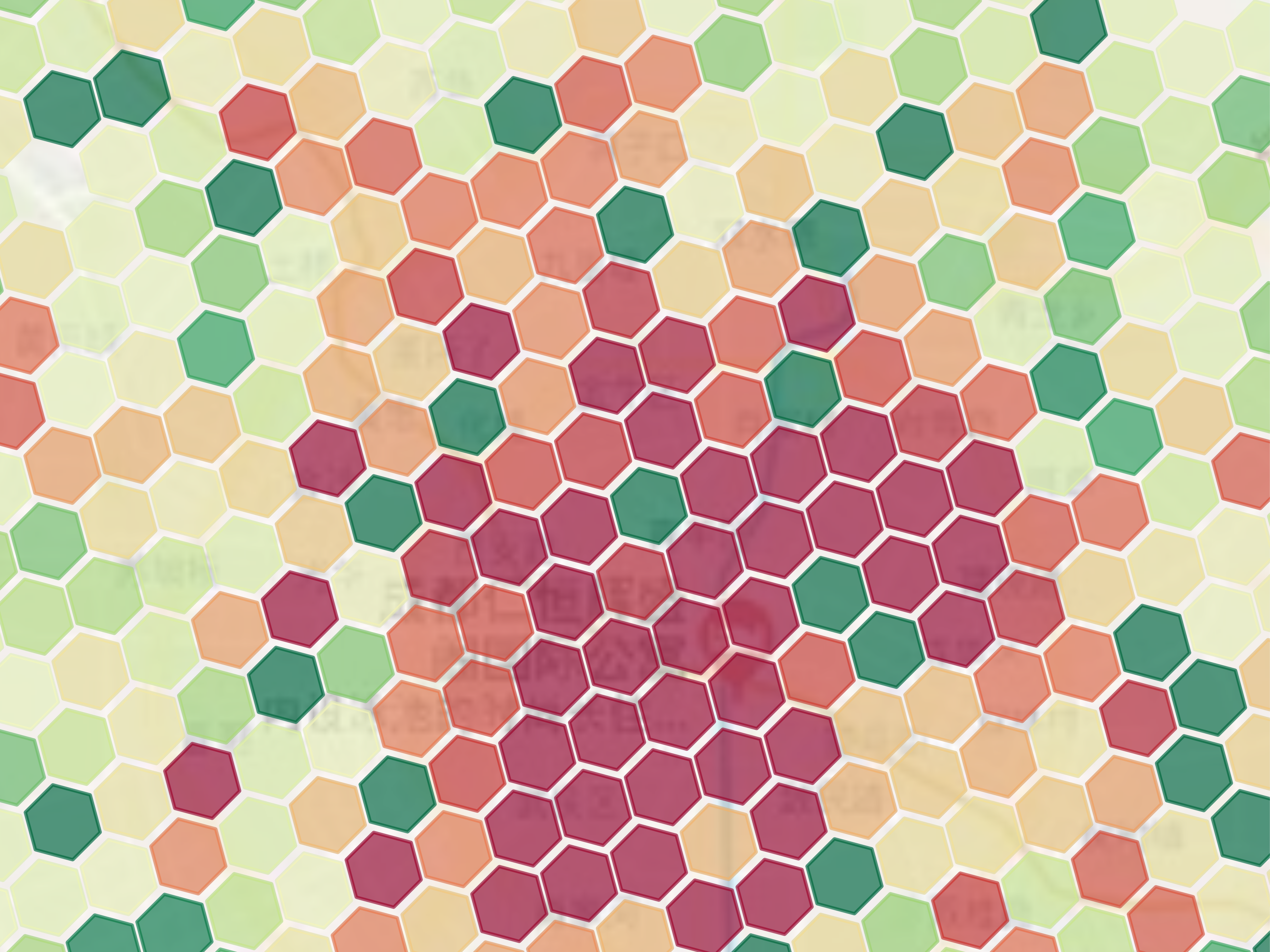}
		\caption{REV}
		\label{subfig:sd_rev}
	\end{subfigure}
  \caption{An example of the demand-supply gap in the city center during peak hours. Grids with more drivers are shown in green (in red if opposite) and the gap is proportional to the shade of colors.}
  \label{fig:sd}
\end{figure}

\paragraph{Order Response Rate}
In Table \ref{table:grid_oor} we compare the performance of different models in terms of the order response rate (OOR), which is the number of orders taken divided by the number of orders generated.
RES has a higher OOR than the random strategy as it focuses on reducing the trip distance to take more orders.
On the other hand, REV aims to pick higher value orders with longer trip time, leading to sacrifice on OOR.
Although REV has a relatively higher GMV than other two rule-based methods, its lower OOR indicates a lower customer satisfaction rate, thus failing to meet the requirement of an optimal order dispatching algorithm.
By considering both the average order value and the destination potential, IOD and COD achieve higher OOR as well.
High-priced orders with low destination potential, i.e., a long trip to a suburban area, will be less possible to get picked, thus avoiding harming OOR when trying to take a high-priced order.

\begin{table}
  \setlength{\tabcolsep}{2.8pt}
  \caption{Performance comparison in terms of the average destination potential (ADP) on the test set. The percentage difference shown for all methods is with respect to RAN.}
  \label{table:grid_adp}
  \centering
  \begin{tabular}{cccc}
    \toprule
    $p_{sample}$ & 100\% & 50\% & 10\% \\
    \midrule
    RES
    & $+6.36 \pm 0.15$ & $+7.40 \pm 0.54$ & $+9.97 \pm 0.43$ \\
    REV
    & $-20.55 \pm 0.16$ & $-20.23 \pm 0.52$ & $-20.55 \pm 1.58$ \\
    IOD
    & $+54.33 \pm 0.09$ & $+54.71 \pm 0.20$ & $+53.33 \pm 0.54$ \\
    COD
    & $\textbf{+66.74} \pm \textbf{0.05}$ & $\textbf{+66.90} \pm \textbf{0.29}$ & $\textbf{+64.69} \pm \textbf{0.61}$ \\
    \bottomrule
  \end{tabular}
\end{table}

\paragraph{Average Order Destination Potential}
To better present our method on optimizing the demand-supply gap, we list the average order destination potential (ADP) in Table \ref{table:grid_adp}.
Note that all ADP values are negative, which indicates the supply still cannot fully satisfy the demand on average.
However, IOD and COD largely alleviate the problem by dispatching orders to places with higher demand.
As shown in Fig. \ref{fig:sd}, COD largely fills the demand-supply gap in the city center during peak hours, while REV fails to assign drivers to take orders with high-demand destination, thus leaving many grids with unserved orders.

\begin{figure}[t]
  \centering
  \includegraphics[height=0.5\columnwidth]{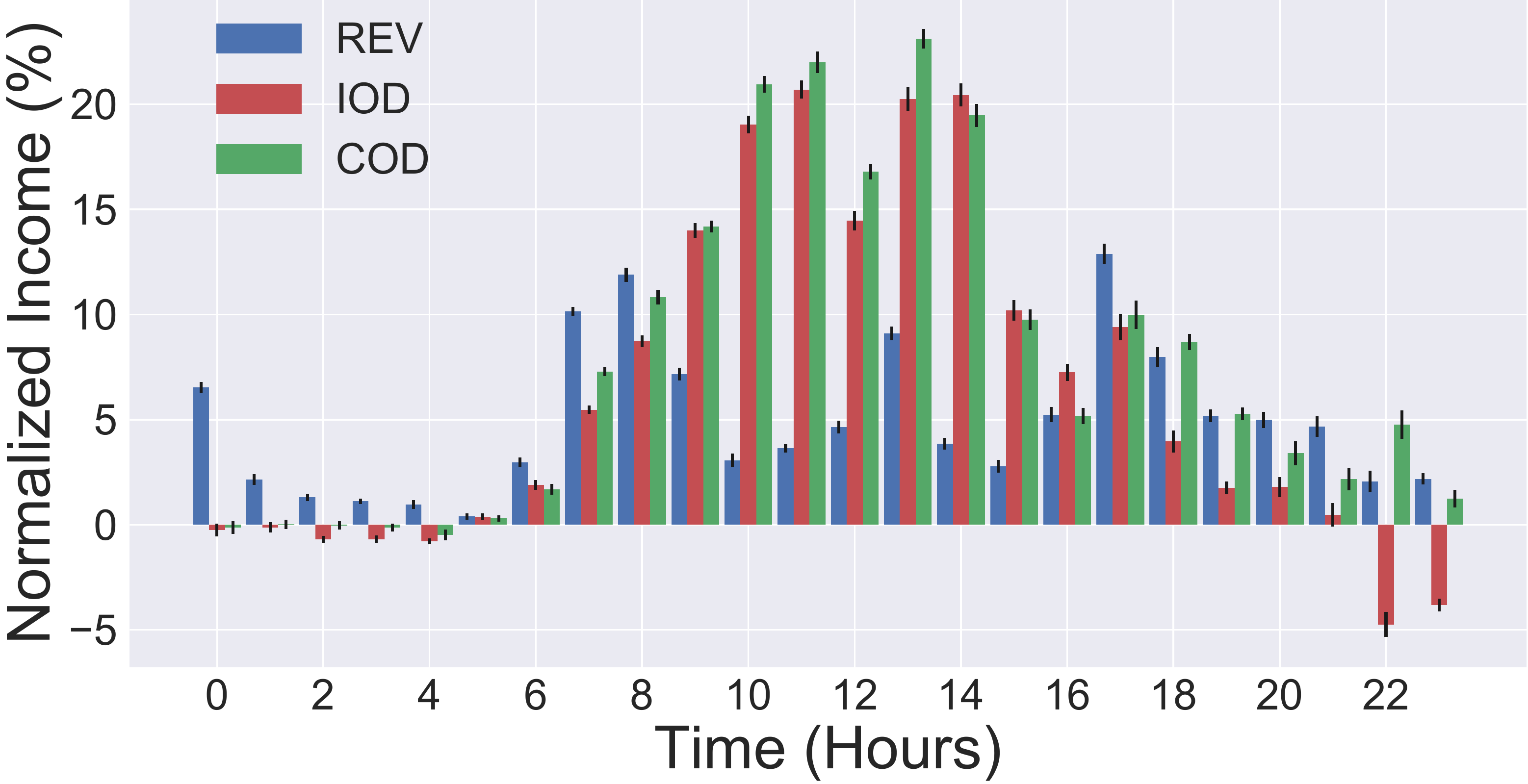}
  \caption{Normalized hourly income of REV, IOD and COD with respect to the average hourly income of RAN.}
  \label{fig:daycomparison_grid_all}
\end{figure}

\paragraph{Sequential Performance Analysis}
To investigate the performance change of our proposed algorithms regarding the time of the day, Fig. \ref{fig:daycomparison_grid_all} shows the normalized income of all algorithms with respect to the average hourly revenue of RAN. 
We eliminate RES from this comparison since it does not focus on getting a higher GMV and has relatively low performance.
A positive value of the bar graph illustrates an increase in income compared to RAN, and vice versa.
As shown in Fig. \ref{fig:daycomparison_grid_all}, MARL methods (IOD and COD) outperform RAN in most of the hours in a day (except for the late night period between 12 a.m. to 4 a.m., when very few orders are generated).
During the peak hours in the morning and at night, MARL methods achieve a much higher hourly income than RAN.
REV achieves a higher income than MARL methods between late night and early morning (12 a.m. to 8 a.m.) because of its aggressive strategy to take high-priced orders.
However, this strategy ignores the destination quality of orders and assigns many drivers with orders to places with low demand, resulting in a significant income drop during the rest of the day (except for the evening peak hours when the possibility of encountering high-priced orders is large enough for REV to counteract the influence of the low response rate).
On the other hand, IOD and COD earn significantly higher than RAN and REV for the rest of the day, possibly because of MARL methods' ability to recognize a better order in terms of both the order fee and the destination potential.
As the order choice of MARL methods will prevent the agent from choosing a destination with lower potential, agents following MARL methods are thus enjoying more sustainable incomes.
Also, COD outperforms IOD constantly, showing the effectiveness of explicitly conditioning on other agents' policies to capture the interaction between agents.

\begin{table}[t]
  \caption{Performance comparison of different reward settings applied to IOD with respect to RAN.}
  \label{table:rewardfunc_comparison}
  \centering
  \begin{tabular}{cccc}
    \toprule
    Method     & GMV (\%)    & OOR (\%) & ADP (\%)\\
    \midrule
    AVE
    & $-0.36 \pm 0.09$
    & $-0.14 \pm 0.07$
    & $+1.11 \pm 1.06$ \\
    IND
    & $+0.36 \pm 0.10$
    & $+0.30 \pm 0.09$
    & $-1.26 \pm 0.87$ \\
    AVE + DP
    & $-0.26 \pm 0.09$
    & $-0.10 \pm 0.06$
    & $+1.17 \pm 1.08$ \\
    IND + DP
    & $\textbf{+2.97} \pm \textbf{0.02}$
    & $\textbf{+2.54} \pm \textbf{0.03}$
    & $\textbf{+59.44} \pm \textbf{1.33}$ \\
    \bottomrule
  \end{tabular}
\end{table}

\paragraph{Effectiveness of Reward Settings}
As described in Section \ref{subsec:formation}, we use the destination potential as a reward function regularizer to encourage cooperation between agents.
To show the effectiveness of this reward setting, we compare the GMV, OOR, and ADP of setting average income (AVE) and  independent income (IND) as each agent's reward for IOD respectively.
We also measure the performance of adding DP as a regularizer for both settings.
For this experiment, we bootstrap 10\% of the original real orders for the training and test set separately.
As shown in Table \ref{table:rewardfunc_comparison}, the performance of AVE in terms of all metrics is relatively lower than those of IND methods and RAN (even with DP added).
This is possibly because of the credit assignment problem, where the agent's behavior is drowned by the noise of other agents' impact on the reward function.
On the other hand, setting the individual income as the reward helps to distinguish each agent's contribution to the global objective from others, while adding DP as a regularizer further encourages the coordination between agents by arranging them to places with higher demand.

\subsection{Coordinate-based Experiment}
\label{subsec:coord-based}

\subsubsection{Environment Setting}
As the real-world environment is coordinate-based rather than grid-based, we also conduct experiments on a more complex coordinate-based simulator provided by DiDi Chuxing.
At each time step $t$, the coordinate-based simulator provides an observation $\vect{o}^t$ including a set of active drivers and a set of available orders.
Each order feature includes the coordinate for the origin and the destination, while each driver has the coordinate as the location feature.
The order dispatching algorithm works the same as described in Section \ref{subsubsec:grid-based-setting}.
To better approximate the real-world scenario, this simulator also considers order cancelations, i.e., an order might be canceled during the pick-up process.
This dynamic is controlled by a random variable which is positively related to the arriving time.
The data resource of this simulator is based on historical dispatching events, including order generation events, driver logging on/off events and order fee estimation.
During the training stage, the simulator will load five weekdays data with $350K$ dispatching events and generate $9K$ drivers.
For evaluation, our model is applied on future days which are not used in the training phase.

\subsubsection{Model Setting}
We evaluate the performance of following MARL based methods including IOD, COD, and a DQN variation of IOD (Q-IOD), i.e., without the policy network.
We also compare these MARL methods with a centralized combinatorial optimization method based on the Hungarian algorithm (HOD).
The HOD method focuses on minimizing the average arriving time (AAT) by setting the weight of each driver-order pair with the pick-up distance.
For all MARL based methods, the same network architecture setting as described in Section 
\ref{grid-based-setting} is applied.
Except that we use a mini-batch size of 200 because of the shorter simulation gap.
The regularization ratio for pick-up distance is $\prescript{2}{}{\alpha}=-0.1$ in this experiment.

\subsubsection{Result Analysis}
We train all MARL methods for 400K iterations and apply the trained model in a test set (consists of three weekdays) for comparison. We compare different algorithms in terms of the total income in a day (GMV) and the average arriving time (AAT).
GMV2 considers the cancellation while GMV1 doesn't. All the above metrics are normalized with respect to the result of HOD.

\begin{table}[t]
  \caption{Performance comparison in terms of the GMV and the average arriving time (AAT) with respect to HOD.}
  \label{table:coord-result}
  \centering
  \begin{tabular}{cccc}
    \toprule
    Method     & GMV1     & GMV2  & AAT \\
    \midrule
    Q-IOD
    & $-0.02\%$
    & $-0.77\%$
    & $+6.76\%$ \\
    IOD
    & $+0.02\%$
    & $-0.48\%$
    & $+5.90\%$ \\
    COD
    & $\textbf{+0.32\%}$
    & $\textbf{+0.06\%}$
    & $+5.49\%$ \\
    \bottomrule
  \end{tabular}
\end{table}

As shown in Table \ref{table:coord-result}, the result of COD largely outperforms Q-IOD and IOD in both GMV1 and GMV2, showing the effectiveness of direct modeling of other agents' policies in MARL.
In addition, COD outperforms HOD in both GMV settings as well; this justifies the advantage of MARL algorithms that exploit the interaction between agents and the environment to maximize the cumulative reward.
The performance improvement of GMV2 is smaller than that of GMV1 for MARL methods. This is possibly because that HOD works by minimizing the global pick-up distance and has a shorter waiting time.
On the other hand, MARL methods only consider the pick-up distance as a regularization term, thus performing comparatively worse than HOD regarding AAT. 
As shown in Table \ref{table:coord-result}, the AAT of all MARL methods are relatively longer than that of the combinatorial optimization method.
However, as the absolute values of GMV are orders of magnitude higher than ATT, the increase in ATT is relatively minor and is thus tolerable in the order dispatching task.
Also, MARL methods require no centralized control during execution, thus making the order dispatching system more robust to potential hardware or connectivity failures. 

\section{Related Work}
\paragraph{Order Dispatching}
Several previous works addressed the order dispatching problem by either centralized or decentralized ruled-based approaches.
\citet{lee2004taxi} and \citet{10.1007/978-3-540-72590-9_96} chose the pick-up distance (or time) as the basic criterion, and focused on finding the nearest option from a set of homogeneous drivers for each order on a first-come, first-served basis.
These approaches only focus on the individual order pick-up distance; however, they do not account for the possibility of other orders in the waiting queue being more suitable for this driver.
To improve global performance, \citet{Zhang:2017:TOD:3097983.3098138} proposed a novel model based on centralized combinatorial optimization by concurrently matching multiple driver-order pairs within a short time window.
They considered each driver as heterogenous by taking the long-term behavior history and short-term interests into account.
The above methods work with centralized control, which is prone to the potential "single point of failure" \cite{lynch2009single}.

With the decentralized setting, \citet{seow2010collaborative} addressed the problem by grouping neighboring drivers and orders in a small multi-agent environment, and then simultaneously assigning orders to drivers within the group.
Drivers in a group are considered as agents who conduct negotiations by several rounds of collaborative reasoning to decide whether to exchange current order assignments or not.
This approach requires rounds of direct communications between agents, thus being limited to a local area with a small number of agents. 
\citet{alshamsi2009multiagent} proposed an adaptive approach for the multi-agent scheduling system to enable negotiations between agents (drivers) to re-schedule allocated orders.
They used a cycling transfer algorithm to evaluate each driver-order pair with multiple criteria, requiring a sophisticated design of feature selection and weighting scheme.

Different from rule-based approaches, which require additionally hand-crafted heuristics, we use a model-free RL agent to learn an optimal policy given the rewards and observations provided by the environment.
A very recent work by \citet{Xu:2018:LOD:3219819.3219824} proposed an RL-based dispatching algorithm to optimize resource utilization and user experience in a global and more farsighted view.
However, they formulated the problem with the single-agent setting, which is unable to model the complex interactions between drivers and orders.
On the contrary, our multi-agent setting follows the distributed nature of the peer-to-peer ridesharing problem, providing the dispatching system with the ability to capture the stochastic demand-supply dynamics in large-scale ridesharing scenarios.
During the execution stage, agents will behave under the learned policy independently, thus being more robust to potential hardware or connectivity failures.

\paragraph{Multi-Agent Reinforcement Learning}
One of the most straightforward approaches to adapt reinforcement learning in the multi-agent environment is to make each agent learn independently regardless of the other agents, such as independent $Q$-learning \cite{tan1993multi}. They, however, tend to fail in practice \cite{matignon2012independent} because of the non-stationary nature of the multi-agent environment.
Several approaches have been attempted to address this problem, including sharing the policy parameters \cite{gupta2017cooperative}, training the $Q$-function with other agent's policy parameters \cite{tesauro2004extending}, or using importance sampling to learn from data gathered in a different environment \cite{foerster2017stabilising}.
The idea of centralized training with decentralized execution has been investigated by several works \cite{foerster2017counterfactual, lowe2017multi, peng2017multiagent} recently for MARL using policy gradients \cite{sutton1999policy}, and deep neural networks function approximators, based on the actor-critic framework \cite{NIPS1999_1786}.
Agents within this paradigm learn a centralized $Q$-function augmented with actions of other agents as the \emph{critic} during training stage, and use the learned policy (the \emph{actor}) with local observations to guide their behaviors during execution.
Most of these approaches limit their work to a small number of agents usually less than ten.
To address the problem of the increasing input space and accumulated exploratory noises of other agents in large-scale MARL, \citet{pmlr-v80-yang18d} proposed a novel method by integrating MARL with mean field approximations and proved its convergence in a tabular $Q$-function setting.
In this work, we further develop  MFRL and prove its convergence when the $Q$-function is represented by function approximators.

\section{Conclusion}
In this paper, we proposed the multi-agent reinforcement learning solution to the order dispatching problem. Results on  two large-scale simulation environments have shown that our proposed algorithms (COD and IOD) achieved (1) a higher GMV and OOR than three rule-based methods (RAN, RES, REV);
(2) a higher GMV than the combinatorial optimization method (HOD), with desirable properties of fully distributed execution;
(3) lower supply-demand gap during the rush hours, which indicates the ability to reduce traffic congestion.
We also provide the convergence proof of applying mean field theory to MARL with function approximations as the theoretical justification of our proposed algorithms.
Furthermore, our MARL approaches could achieve fully decentralized execution by distributing the centralized trained policy to each vehicle through Vehicle-to-Network (V2N).
For future work, we are working towards controlling ATT while maximizing the GMV with the proposed MARL framework. Another interesting and practical direction to develop is to use a heterogeneous agent setting with individual specific features, such as the personal preference and the distance from its own destination.

\bibliographystyle{ACM-Reference-Format}
\small
\bibliography{didi}

%%% -*-BibTeX-*-
%%% Do NOT edit. File created by BibTeX with style
%%% ACM-Reference-Format-Journals [18-Jan-2012].

\begin{thebibliography}{44}

%%% ====================================================================
%%% NOTE TO THE USER: you can override these defaults by providing
%%% customized versions of any of these macros before the \bibliography
%%% command.  Each of them MUST provide its own final punctuation,
%%% except for \shownote{}, \showDOI{}, and \showURL{}.  The latter two
%%% do not use final punctuation, in order to avoid confusing it with
%%% the Web address.
%%%
%%% To suppress output of a particular field, define its macro to expand
%%% to an empty string, or better, \unskip, like this:
%%%
%%% \newcommand{\showDOI}[1]{\unskip}   % LaTeX syntax
%%%
%%% \def \showDOI #1{\unskip}           % plain TeX syntax
%%%
%%% ====================================================================

\ifx \showCODEN    \undefined \def \showCODEN     #1{\unskip}     \fi
\ifx \showDOI      \undefined \def \showDOI       #1{#1}\fi
\ifx \showISBNx    \undefined \def \showISBNx     #1{\unskip}     \fi
\ifx \showISBNxiii \undefined \def \showISBNxiii  #1{\unskip}     \fi
\ifx \showISSN     \undefined \def \showISSN      #1{\unskip}     \fi
\ifx \showLCCN     \undefined \def \showLCCN      #1{\unskip}     \fi
\ifx \shownote     \undefined \def \shownote      #1{#1}          \fi
\ifx \showarticletitle \undefined \def \showarticletitle #1{#1}   \fi
\ifx \showURL      \undefined \def \showURL       {\relax}        \fi
% The following commands are used for tagged output and should be
% invisible to TeX
\providecommand\bibfield[2]{#2}
\providecommand\bibinfo[2]{#2}
\providecommand\natexlab[1]{#1}
\providecommand\showeprint[2][]{arXiv:#2}

\bibitem[\protect\citeauthoryear{Abboud, Omar, and Zhuang}{Abboud
  et~al\mbox{.}}{2016}]%
        {abboud2016interworking}
\bibfield{author}{\bibinfo{person}{Khadige Abboud},
  \bibinfo{person}{Hassan~Aboubakr Omar}, {and} \bibinfo{person}{Weihua
  Zhuang}.} \bibinfo{year}{2016}\natexlab{}.
\newblock \showarticletitle{Interworking of DSRC and cellular network
  technologies for V2X communications: A survey}.
\newblock \bibinfo{journal}{\emph{IEEE transactions on vehicular technology}}
  \bibinfo{volume}{65}, \bibinfo{number}{12} (\bibinfo{year}{2016}),
  \bibinfo{pages}{9457--9470}.
\newblock


\bibitem[\protect\citeauthoryear{Agogino and Tumer}{Agogino and Tumer}{2008}]%
        {Agogino08analyzingand}
\bibfield{author}{\bibinfo{person}{Adrian~K. Agogino} {and}
  \bibinfo{person}{Kagan Tumer}.} \bibinfo{year}{2008}\natexlab{}.
\newblock \showarticletitle{Analyzing and Visualizing Multiagent Rewards in
  Dynamic and Stochastic Environments}.
\newblock \bibinfo{journal}{\emph{Journal of Autonomous Agents and Multiagent
  Systems}} (\bibinfo{year}{2008}), \bibinfo{pages}{320--338}.
\newblock


\bibitem[\protect\citeauthoryear{Alshamsi and Abdallah}{Alshamsi and
  Abdallah}{2009}]%
        {alshamsi2009multiagent}
\bibfield{author}{\bibinfo{person}{Aamena Alshamsi} {and}
  \bibinfo{person}{Sherief Abdallah}.} \bibinfo{year}{2009}\natexlab{}.
\newblock \showarticletitle{Multiagent self-organization for a taxi dispatch
  system}. In \bibinfo{booktitle}{\emph{Proceedings of 8th International
  Conference of Autonomous Agents and Multiagent Systems, 2009}}.
  \bibinfo{pages}{89--96}.
\newblock


\bibitem[\protect\citeauthoryear{Amey, Attanucci, and Mishalani}{Amey
  et~al\mbox{.}}{2011}]%
        {amey2011real}
\bibfield{author}{\bibinfo{person}{Andrew Amey}, \bibinfo{person}{John
  Attanucci}, {and} \bibinfo{person}{Rabi Mishalani}.}
  \bibinfo{year}{2011}\natexlab{}.
\newblock \showarticletitle{Real-time ridesharing: opportunities and challenges
  in using mobile phone technology to improve rideshare services}.
\newblock \bibinfo{journal}{\emph{Transportation Research Record: Journal of
  the Transportation Research Board}} \bibinfo{number}{2217}
  (\bibinfo{year}{2011}), \bibinfo{pages}{103--110}.
\newblock


\bibitem[\protect\citeauthoryear{Atzori, Iera, and Morabito}{Atzori
  et~al\mbox{.}}{2010}]%
        {ATZORI20102787}
\bibfield{author}{\bibinfo{person}{Luigi Atzori}, \bibinfo{person}{Antonio
  Iera}, {and} \bibinfo{person}{Giacomo Morabito}.}
  \bibinfo{year}{2010}\natexlab{}.
\newblock \showarticletitle{The Internet of Things: A survey}.
\newblock \bibinfo{journal}{\emph{Computer Networks}} \bibinfo{volume}{54},
  \bibinfo{number}{15} (\bibinfo{year}{2010}), \bibinfo{pages}{2787 -- 2805}.
\newblock
\showISSN{1389-1286}
\urldef\tempurl%
\url{https://doi.org/10.1016/j.comnet.2010.05.010}
\showDOI{\tempurl}


\bibitem[\protect\citeauthoryear{Bertsekas}{Bertsekas}{2012}]%
        {bertsekas2012weighted}
\bibfield{author}{\bibinfo{person}{Dimitri~P Bertsekas}.}
  \bibinfo{year}{2012}\natexlab{}.
\newblock \showarticletitle{Weighted sup-norm contractions in dynamic
  programming: A review and some new applications}.
\newblock \bibinfo{journal}{\emph{Dept. Elect. Eng. Comput. Sci., Massachusetts
  Inst. Technol., Cambridge, MA, USA, Tech. Rep. LIDS-P-2884}}
  (\bibinfo{year}{2012}).
\newblock


\bibitem[\protect\citeauthoryear{Bu{\c{s}}oniu, Babu{\v{s}}ka, and
  De~Schutter}{Bu{\c{s}}oniu et~al\mbox{.}}{2010}]%
        {Busoniu2010}
\bibfield{author}{\bibinfo{person}{Lucian Bu{\c{s}}oniu},
  \bibinfo{person}{Robert Babu{\v{s}}ka}, {and} \bibinfo{person}{Bart
  De~Schutter}.} \bibinfo{year}{2010}\natexlab{}.
\newblock \bibinfo{booktitle}{\emph{Multi-agent Reinforcement Learning: An
  Overview}}.
\newblock \bibinfo{publisher}{Springer Berlin Heidelberg},
  \bibinfo{address}{Berlin, Heidelberg}, \bibinfo{pages}{183--221}.
\newblock
\showISBNx{978-3-642-14435-6}
\urldef\tempurl%
\url{https://doi.org/10.1007/978-3-642-14435-6_7}
\showDOI{\tempurl}


\bibitem[\protect\citeauthoryear{Foerster, Farquhar, Afouras, Nardelli, and
  Whiteson}{Foerster et~al\mbox{.}}{2017a}]%
        {foerster2017counterfactual}
\bibfield{author}{\bibinfo{person}{Jakob Foerster}, \bibinfo{person}{Gregory
  Farquhar}, \bibinfo{person}{Triantafyllos Afouras}, \bibinfo{person}{Nantas
  Nardelli}, {and} \bibinfo{person}{Shimon Whiteson}.}
  \bibinfo{year}{2017}\natexlab{a}.
\newblock \showarticletitle{Counterfactual Multi-Agent Policy Gradients}.
\newblock \bibinfo{journal}{\emph{arXiv preprint arXiv:1705.08926}}
  (\bibinfo{year}{2017}).
\newblock


\bibitem[\protect\citeauthoryear{Foerster, Nardelli, Farquhar, Afouras, Torr,
  Kohli, and Whiteson}{Foerster et~al\mbox{.}}{2017b}]%
        {foerster2017stabilising}
\bibfield{author}{\bibinfo{person}{Jakob Foerster}, \bibinfo{person}{Nantas
  Nardelli}, \bibinfo{person}{Gregory Farquhar}, \bibinfo{person}{Triantafyllos
  Afouras}, \bibinfo{person}{Philip~HS Torr}, \bibinfo{person}{Pushmeet Kohli},
  {and} \bibinfo{person}{Shimon Whiteson}.} \bibinfo{year}{2017}\natexlab{b}.
\newblock \showarticletitle{Stabilising Experience Replay for Deep Multi-Agent
  Reinforcement Learning}. In \bibinfo{booktitle}{\emph{International
  Conference on Machine Learning}}. \bibinfo{pages}{1146--1155}.
\newblock


\bibitem[\protect\citeauthoryear{Furuhata, Dessouky, Ord{\'o}{\~n}ez, Brunet,
  Wang, and Koenig}{Furuhata et~al\mbox{.}}{2013}]%
        {furuhata2013ridesharing}
\bibfield{author}{\bibinfo{person}{Masabumi Furuhata}, \bibinfo{person}{Maged
  Dessouky}, \bibinfo{person}{Fernando Ord{\'o}{\~n}ez},
  \bibinfo{person}{Marc-Etienne Brunet}, \bibinfo{person}{Xiaoqing Wang}, {and}
  \bibinfo{person}{Sven Koenig}.} \bibinfo{year}{2013}\natexlab{}.
\newblock \showarticletitle{Ridesharing: The state-of-the-art and future
  directions}.
\newblock \bibinfo{journal}{\emph{Transportation Research Part B:
  Methodological}}  \bibinfo{volume}{57} (\bibinfo{year}{2013}),
  \bibinfo{pages}{28--46}.
\newblock


\bibitem[\protect\citeauthoryear{Gubbi, Buyya, Marusic, and Palaniswami}{Gubbi
  et~al\mbox{.}}{2013}]%
        {GUBBI20131645}
\bibfield{author}{\bibinfo{person}{Jayavardhana Gubbi},
  \bibinfo{person}{Rajkumar Buyya}, \bibinfo{person}{Slaven Marusic}, {and}
  \bibinfo{person}{Marimuthu Palaniswami}.} \bibinfo{year}{2013}\natexlab{}.
\newblock \showarticletitle{Internet of Things (IoT): A vision, architectural
  elements, and future directions}.
\newblock \bibinfo{journal}{\emph{Future Generation Computer Systems}}
  \bibinfo{volume}{29}, \bibinfo{number}{7} (\bibinfo{year}{2013}),
  \bibinfo{pages}{1645 -- 1660}.
\newblock
\showISSN{0167-739X}
\urldef\tempurl%
\url{https://doi.org/10.1016/j.future.2013.01.010}
\showDOI{\tempurl}
\newblock
\shownote{Including Special sections: Cyber-enabled Distributed Computing for
  Ubiquitous Cloud and Network Services \& Cloud Computing and Scientific
  Applications --- Big Data, Scalable Analytics, and Beyond.}


\bibitem[\protect\citeauthoryear{Gupta, Egorov, and Kochenderfer}{Gupta
  et~al\mbox{.}}{2017}]%
        {gupta2017cooperative}
\bibfield{author}{\bibinfo{person}{Jayesh~K Gupta}, \bibinfo{person}{Maxim
  Egorov}, {and} \bibinfo{person}{Mykel Kochenderfer}.}
  \bibinfo{year}{2017}\natexlab{}.
\newblock \showarticletitle{Cooperative multi-agent control using deep
  reinforcement learning}. In \bibinfo{booktitle}{\emph{AAMAS}}. Springer,
  \bibinfo{pages}{66--83}.
\newblock


\bibitem[\protect\citeauthoryear{Jaakkola, Jordan, and Singh}{Jaakkola
  et~al\mbox{.}}{1994}]%
        {jaakkola1994convergence}
\bibfield{author}{\bibinfo{person}{Tommi Jaakkola}, \bibinfo{person}{Michael~I
  Jordan}, {and} \bibinfo{person}{Satinder~P Singh}.}
  \bibinfo{year}{1994}\natexlab{}.
\newblock \showarticletitle{Convergence of stochastic iterative dynamic
  programming algorithms}. In \bibinfo{booktitle}{\emph{NIPS}}.
  \bibinfo{pages}{703--710}.
\newblock


\bibitem[\protect\citeauthoryear{Kingma and Ba}{Kingma and Ba}{2014}]%
        {kingma2014adam}
\bibfield{author}{\bibinfo{person}{Diederik~P Kingma} {and}
  \bibinfo{person}{Jimmy Ba}.} \bibinfo{year}{2014}\natexlab{}.
\newblock \showarticletitle{Adam: A method for stochastic optimization}.
\newblock \bibinfo{journal}{\emph{arXiv preprint arXiv:1412.6980}}
  (\bibinfo{year}{2014}).
\newblock


\bibitem[\protect\citeauthoryear{Konda and Tsitsiklis}{Konda and
  Tsitsiklis}{2000}]%
        {NIPS1999_1786}
\bibfield{author}{\bibinfo{person}{Vijay~R. Konda} {and}
  \bibinfo{person}{John~N. Tsitsiklis}.} \bibinfo{year}{2000}\natexlab{}.
\newblock \showarticletitle{Actor-Critic Algorithms}.
\newblock In \bibinfo{booktitle}{\emph{Advances in Neural Information
  Processing Systems 12}}, \bibfield{editor}{\bibinfo{person}{S.~A. Solla},
  \bibinfo{person}{T.~K. Leen}, {and} \bibinfo{person}{K.~M\"{u}ller}} (Eds.).
  \bibinfo{publisher}{MIT Press}, \bibinfo{pages}{1008--1014}.
\newblock
\urldef\tempurl%
\url{http://papers.nips.cc/paper/1786-actor-critic-algorithms.pdf}
\showURL{%
\tempurl}


\bibitem[\protect\citeauthoryear{Lee, Wang, Cheu, and Teo}{Lee
  et~al\mbox{.}}{2004}]%
        {lee2004taxi}
\bibfield{author}{\bibinfo{person}{Der-Horng Lee}, \bibinfo{person}{Hao Wang},
  \bibinfo{person}{Ruey Cheu}, {and} \bibinfo{person}{Siew Teo}.}
  \bibinfo{year}{2004}\natexlab{}.
\newblock \showarticletitle{Taxi dispatch system based on current demands and
  real-time traffic conditions}.
\newblock \bibinfo{journal}{\emph{Transportation Research Record: Journal of
  the Transportation Research Board}} \bibinfo{number}{1882}
  (\bibinfo{year}{2004}), \bibinfo{pages}{193--200}.
\newblock


\bibitem[\protect\citeauthoryear{Lee, Park, Kim, Yang, Kim, and Kim}{Lee
  et~al\mbox{.}}{2007}]%
        {10.1007/978-3-540-72590-9_96}
\bibfield{author}{\bibinfo{person}{Junghoon Lee}, \bibinfo{person}{Gyung-Leen
  Park}, \bibinfo{person}{Hanil Kim}, \bibinfo{person}{Young-Kyu Yang},
  \bibinfo{person}{Pankoo Kim}, {and} \bibinfo{person}{Sang-Wook Kim}.}
  \bibinfo{year}{2007}\natexlab{}.
\newblock \showarticletitle{A Telematics Service System Based on the Linux
  Cluster}. In \bibinfo{booktitle}{\emph{Computational Science -- ICCS 2007}},
  \bibfield{editor}{\bibinfo{person}{Yong Shi}, \bibinfo{person}{Geert~Dick van
  Albada}, \bibinfo{person}{Jack Dongarra}, {and} \bibinfo{person}{Peter M.~A.
  Sloot}} (Eds.). \bibinfo{publisher}{Springer Berlin Heidelberg},
  \bibinfo{address}{Berlin, Heidelberg}, \bibinfo{pages}{660--667}.
\newblock
\showISBNx{978-3-540-72590-9}


\bibitem[\protect\citeauthoryear{Li, Hong, and Zhang}{Li et~al\mbox{.}}{2016}]%
        {92d45ce9b7c8482185638b0242fdfd52}
\bibfield{author}{\bibinfo{person}{Ziru Li}, \bibinfo{person}{Yili Hong}, {and}
  \bibinfo{person}{Zhongju Zhang}.} \bibinfo{year}{2016}\natexlab{}.
\newblock \showarticletitle{An empirical analysis of on-demand ride sharing and
  traffic congestion}. In \bibinfo{booktitle}{\emph{2016 International
  Conference on Information Systems, ICIS 2016}}.
  \bibinfo{publisher}{Association for Information Systems}.
\newblock


\bibitem[\protect\citeauthoryear{Lin, Zhao, Xu, and Zhou}{Lin
  et~al\mbox{.}}{2018}]%
        {lin2018efficient}
\bibfield{author}{\bibinfo{person}{Kaixiang Lin}, \bibinfo{person}{Renyu Zhao},
  \bibinfo{person}{Zhe Xu}, {and} \bibinfo{person}{Jiayu Zhou}.}
  \bibinfo{year}{2018}\natexlab{}.
\newblock \showarticletitle{Efficient Large-Scale Fleet Management via
  Multi-Agent Deep Reinforcement Learning}.
\newblock \bibinfo{journal}{\emph{arXiv preprint arXiv:1802.06444}}
  (\bibinfo{year}{2018}).
\newblock


\bibitem[\protect\citeauthoryear{Littman}{Littman}{1994}]%
        {Littman:1994:MGF:3091574.3091594}
\bibfield{author}{\bibinfo{person}{Michael~L. Littman}.}
  \bibinfo{year}{1994}\natexlab{}.
\newblock \showarticletitle{Markov Games As a Framework for Multi-agent
  Reinforcement Learning}. In \bibinfo{booktitle}{\emph{Proceedings of the
  Eleventh International Conference on International Conference on Machine
  Learning}} \emph{(\bibinfo{series}{ICML'94})}. \bibinfo{publisher}{Morgan
  Kaufmann Publishers Inc.}, \bibinfo{address}{San Francisco, CA, USA},
  \bibinfo{pages}{157--163}.
\newblock
\showISBNx{1-55860-335-2}
\urldef\tempurl%
\url{http://dl.acm.org/citation.cfm?id=3091574.3091594}
\showURL{%
\tempurl}


\bibitem[\protect\citeauthoryear{Lowe, Wu, Tamar, Harb, Abbeel, and
  Mordatch}{Lowe et~al\mbox{.}}{2017}]%
        {lowe2017multi}
\bibfield{author}{\bibinfo{person}{Ryan Lowe}, \bibinfo{person}{Yi Wu},
  \bibinfo{person}{Aviv Tamar}, \bibinfo{person}{Jean Harb},
  \bibinfo{person}{OpenAI~Pieter Abbeel}, {and} \bibinfo{person}{Igor
  Mordatch}.} \bibinfo{year}{2017}\natexlab{}.
\newblock \showarticletitle{Multi-agent actor-critic for mixed
  cooperative-competitive environments}. In \bibinfo{booktitle}{\emph{NIPS}}.
  \bibinfo{pages}{6382--6393}.
\newblock


\bibitem[\protect\citeauthoryear{Lu, Cheng, Zhang, Shen, and Mark}{Lu
  et~al\mbox{.}}{2014}]%
        {6823640}
\bibfield{author}{\bibinfo{person}{N. Lu}, \bibinfo{person}{N. Cheng},
  \bibinfo{person}{N. Zhang}, \bibinfo{person}{X. Shen}, {and}
  \bibinfo{person}{J.~W. Mark}.} \bibinfo{year}{2014}\natexlab{}.
\newblock \showarticletitle{Connected Vehicles: Solutions and Challenges}.
\newblock \bibinfo{journal}{\emph{IEEE Internet of Things Journal}}
  \bibinfo{volume}{1}, \bibinfo{number}{4} (\bibinfo{date}{Aug}
  \bibinfo{year}{2014}), \bibinfo{pages}{289--299}.
\newblock
\showISSN{2327-4662}
\urldef\tempurl%
\url{https://doi.org/10.1109/JIOT.2014.2327587}
\showDOI{\tempurl}


\bibitem[\protect\citeauthoryear{Lynch}{Lynch}{2009}]%
        {lynch2009single}
\bibfield{author}{\bibinfo{person}{Gary~S Lynch}.}
  \bibinfo{year}{2009}\natexlab{}.
\newblock \bibinfo{booktitle}{\emph{Single point of failure: The 10 essential
  laws of supply chain risk management}}.
\newblock \bibinfo{publisher}{John Wiley \& Sons}.
\newblock


\bibitem[\protect\citeauthoryear{Matignon, Laurent, and Le~Fort-Piat}{Matignon
  et~al\mbox{.}}{2012}]%
        {matignon2012independent}
\bibfield{author}{\bibinfo{person}{Laetitia Matignon},
  \bibinfo{person}{Guillaume~J Laurent}, {and} \bibinfo{person}{Nadine
  Le~Fort-Piat}.} \bibinfo{year}{2012}\natexlab{}.
\newblock \showarticletitle{Independent reinforcement learners in cooperative
  Markov games: a survey regarding coordination problems}.
\newblock \bibinfo{journal}{\emph{The Knowledge Engineering Review}}
  \bibinfo{volume}{27}, \bibinfo{number}{1} (\bibinfo{year}{2012}),
  \bibinfo{pages}{1--31}.
\newblock


\bibitem[\protect\citeauthoryear{Melo, Meyn, and Ribeiro}{Melo
  et~al\mbox{.}}{2008}]%
        {melo2008analysis}
\bibfield{author}{\bibinfo{person}{Francisco~S Melo}, \bibinfo{person}{Sean~P
  Meyn}, {and} \bibinfo{person}{M~Isabel Ribeiro}.}
  \bibinfo{year}{2008}\natexlab{}.
\newblock \showarticletitle{An analysis of reinforcement learning with function
  approximation}. In \bibinfo{booktitle}{\emph{Proceedings of the 25th
  international conference on Machine learning}}. ACM,
  \bibinfo{pages}{664--671}.
\newblock


\bibitem[\protect\citeauthoryear{Mnih, Kavukcuoglu, Silver, Rusu, Veness,
  Bellemare, Graves, Riedmiller, Fidjeland, Ostrovski, et~al\mbox{.}}{Mnih
  et~al\mbox{.}}{2015}]%
        {mnih2015human}
\bibfield{author}{\bibinfo{person}{Volodymyr Mnih}, \bibinfo{person}{Koray
  Kavukcuoglu}, \bibinfo{person}{David Silver}, \bibinfo{person}{Andrei~A
  Rusu}, \bibinfo{person}{Joel Veness}, \bibinfo{person}{Marc~G Bellemare},
  \bibinfo{person}{Alex Graves}, \bibinfo{person}{Martin Riedmiller},
  \bibinfo{person}{Andreas~K Fidjeland}, \bibinfo{person}{Georg Ostrovski},
  {et~al\mbox{.}}} \bibinfo{year}{2015}\natexlab{}.
\newblock \showarticletitle{Human-level control through deep reinforcement
  learning}.
\newblock \bibinfo{journal}{\emph{Nature}} \bibinfo{volume}{518},
  \bibinfo{number}{7540} (\bibinfo{year}{2015}), \bibinfo{pages}{529--533}.
\newblock


\bibitem[\protect\citeauthoryear{Nair and Hinton}{Nair and Hinton}{2010}]%
        {nair2010rectified}
\bibfield{author}{\bibinfo{person}{Vinod Nair} {and}
  \bibinfo{person}{Geoffrey~E Hinton}.} \bibinfo{year}{2010}\natexlab{}.
\newblock \showarticletitle{Rectified linear units improve restricted boltzmann
  machines}. In \bibinfo{booktitle}{\emph{Proceedings of the 27th international
  conference on machine learning (ICML-10)}}. \bibinfo{pages}{807--814}.
\newblock


\bibitem[\protect\citeauthoryear{Papadimitriou and Steiglitz}{Papadimitriou and
  Steiglitz}{1982}]%
        {Papadimitriou:1982:COA:31027}
\bibfield{author}{\bibinfo{person}{Christos~H. Papadimitriou} {and}
  \bibinfo{person}{Kenneth Steiglitz}.} \bibinfo{year}{1982}\natexlab{}.
\newblock \bibinfo{booktitle}{\emph{Combinatorial Optimization: Algorithms and
  Complexity}}.
\newblock \bibinfo{publisher}{Prentice-Hall, Inc.}, \bibinfo{address}{Upper
  Saddle River, NJ, USA}.
\newblock
\showISBNx{0-13-152462-3}


\bibitem[\protect\citeauthoryear{Peng, Yuan, Wen, Yang, Tang, Long, and
  Wang}{Peng et~al\mbox{.}}{2017}]%
        {peng2017multiagent}
\bibfield{author}{\bibinfo{person}{Peng Peng}, \bibinfo{person}{Quan Yuan},
  \bibinfo{person}{Ying Wen}, \bibinfo{person}{Yaodong Yang},
  \bibinfo{person}{Zhenkun Tang}, \bibinfo{person}{Haitao Long}, {and}
  \bibinfo{person}{Jun Wang}.} \bibinfo{year}{2017}\natexlab{}.
\newblock \showarticletitle{Multiagent Bidirectionally-Coordinated Nets for
  Learning to Play StarCraft Combat Games}.
\newblock \bibinfo{journal}{\emph{arXiv preprint arXiv:1703.10069}}
  (\bibinfo{year}{2017}).
\newblock


\bibitem[\protect\citeauthoryear{Seow, Dang, and Lee}{Seow
  et~al\mbox{.}}{2010}]%
        {seow2010collaborative}
\bibfield{author}{\bibinfo{person}{Kiam~Tian Seow}, \bibinfo{person}{Nam~Hai
  Dang}, {and} \bibinfo{person}{Der-Horng Lee}.}
  \bibinfo{year}{2010}\natexlab{}.
\newblock \showarticletitle{A collaborative multiagent taxi-dispatch system}.
\newblock \bibinfo{journal}{\emph{IEEE Transactions on Automation Science and
  Engineering}} \bibinfo{volume}{7}, \bibinfo{number}{3}
  (\bibinfo{year}{2010}), \bibinfo{pages}{607--616}.
\newblock


\bibitem[\protect\citeauthoryear{Silver, Lever, Heess, Degris, Wierstra, and
  Riedmiller}{Silver et~al\mbox{.}}{2014}]%
        {silver2014deterministic}
\bibfield{author}{\bibinfo{person}{David Silver}, \bibinfo{person}{Guy Lever},
  \bibinfo{person}{Nicolas Heess}, \bibinfo{person}{Thomas Degris},
  \bibinfo{person}{Daan Wierstra}, {and} \bibinfo{person}{Martin Riedmiller}.}
  \bibinfo{year}{2014}\natexlab{}.
\newblock \showarticletitle{Deterministic policy gradient algorithms}. In
  \bibinfo{booktitle}{\emph{ICML}}. \bibinfo{pages}{387--395}.
\newblock


\bibitem[\protect\citeauthoryear{Sutton}{Sutton}{1988}]%
        {sutton1988learning}
\bibfield{author}{\bibinfo{person}{Richard~S Sutton}.}
  \bibinfo{year}{1988}\natexlab{}.
\newblock \showarticletitle{Learning to predict by the methods of temporal
  differences}.
\newblock \bibinfo{journal}{\emph{Machine learning}} \bibinfo{volume}{3},
  \bibinfo{number}{1} (\bibinfo{year}{1988}), \bibinfo{pages}{9--44}.
\newblock


\bibitem[\protect\citeauthoryear{Sutton and Barto}{Sutton and Barto}{1998}]%
        {sutton1998reinforcement}
\bibfield{author}{\bibinfo{person}{Richard~S Sutton} {and}
  \bibinfo{person}{Andrew~G Barto}.} \bibinfo{year}{1998}\natexlab{}.
\newblock \bibinfo{booktitle}{\emph{Reinforcement learning: An introduction}}.
  Vol.~\bibinfo{volume}{1}.
\newblock \bibinfo{publisher}{MIT press Cambridge}.
\newblock


\bibitem[\protect\citeauthoryear{Sutton, McAllester, Singh, Mansour,
  et~al\mbox{.}}{Sutton et~al\mbox{.}}{1999}]%
        {sutton1999policy}
\bibfield{author}{\bibinfo{person}{Richard~S Sutton}, \bibinfo{person}{David~A
  McAllester}, \bibinfo{person}{Satinder~P Singh}, \bibinfo{person}{Yishay
  Mansour}, {et~al\mbox{.}}} \bibinfo{year}{1999}\natexlab{}.
\newblock \showarticletitle{Policy gradient methods for reinforcement learning
  with function approximation.}. In \bibinfo{booktitle}{\emph{NIPS}},
  Vol.~\bibinfo{volume}{99}. \bibinfo{pages}{1057--1063}.
\newblock


\bibitem[\protect\citeauthoryear{Szepesv{\'a}ri and Littman}{Szepesv{\'a}ri and
  Littman}{1999}]%
        {szepesvari1999unified}
\bibfield{author}{\bibinfo{person}{Csaba Szepesv{\'a}ri} {and}
  \bibinfo{person}{Michael~L Littman}.} \bibinfo{year}{1999}\natexlab{}.
\newblock \showarticletitle{A unified analysis of value-function-based
  reinforcement-learning algorithms}.
\newblock \bibinfo{journal}{\emph{Neural computation}} \bibinfo{volume}{11},
  \bibinfo{number}{8} (\bibinfo{year}{1999}), \bibinfo{pages}{2017--2060}.
\newblock


\bibitem[\protect\citeauthoryear{Tan}{Tan}{1993}]%
        {tan1993multi}
\bibfield{author}{\bibinfo{person}{Ming Tan}.} \bibinfo{year}{1993}\natexlab{}.
\newblock \showarticletitle{Multi-agent reinforcement learning: Independent vs.
  cooperative agents}. In \bibinfo{booktitle}{\emph{Proceedings of the tenth
  international conference on machine learning}}. \bibinfo{pages}{330--337}.
\newblock


\bibitem[\protect\citeauthoryear{Tesauro}{Tesauro}{2004}]%
        {tesauro2004extending}
\bibfield{author}{\bibinfo{person}{Gerald Tesauro}.}
  \bibinfo{year}{2004}\natexlab{}.
\newblock \showarticletitle{Extending Q-learning to general adaptive
  multi-agent systems}. In \bibinfo{booktitle}{\emph{NIPS}}.
  \bibinfo{pages}{871--878}.
\newblock


\bibitem[\protect\citeauthoryear{Watkins and Dayan}{Watkins and Dayan}{1992}]%
        {watkins1992q}
\bibfield{author}{\bibinfo{person}{Christopher~JCH Watkins} {and}
  \bibinfo{person}{Peter Dayan}.} \bibinfo{year}{1992}\natexlab{}.
\newblock \showarticletitle{Q-learning}.
\newblock \bibinfo{journal}{\emph{Machine learning}} \bibinfo{volume}{8},
  \bibinfo{number}{3-4} (\bibinfo{year}{1992}), \bibinfo{pages}{279--292}.
\newblock


\bibitem[\protect\citeauthoryear{Wooldridge}{Wooldridge}{2009}]%
        {Wooldridge:2009:IMS:1695886}
\bibfield{author}{\bibinfo{person}{Michael Wooldridge}.}
  \bibinfo{year}{2009}\natexlab{}.
\newblock \bibinfo{booktitle}{\emph{An Introduction to MultiAgent Systems}
  (\bibinfo{edition}{2nd} ed.)}.
\newblock \bibinfo{publisher}{Wiley Publishing}.
\newblock
\showISBNx{0470519460, 9780470519462}


\bibitem[\protect\citeauthoryear{Xu, Li, Guan, Zhang, Li, Nan, Liu, Bian, and
  Ye}{Xu et~al\mbox{.}}{2018}]%
        {Xu:2018:LOD:3219819.3219824}
\bibfield{author}{\bibinfo{person}{Zhe Xu}, \bibinfo{person}{Zhixin Li},
  \bibinfo{person}{Qingwen Guan}, \bibinfo{person}{Dingshui Zhang},
  \bibinfo{person}{Qiang Li}, \bibinfo{person}{Junxiao Nan},
  \bibinfo{person}{Chunyang Liu}, \bibinfo{person}{Wei Bian}, {and}
  \bibinfo{person}{Jieping Ye}.} \bibinfo{year}{2018}\natexlab{}.
\newblock \showarticletitle{Large-Scale Order Dispatch in On-Demand
  Ride-Hailing Platforms: A Learning and Planning Approach}. In
  \bibinfo{booktitle}{\emph{Proceedings of the 24th ACM SIGKDD International
  Conference on Knowledge Discovery \&\#38; Data Mining}}
  \emph{(\bibinfo{series}{KDD '18})}. \bibinfo{publisher}{ACM},
  \bibinfo{address}{New York, NY, USA}, \bibinfo{pages}{905--913}.
\newblock
\showISBNx{978-1-4503-5552-0}
\urldef\tempurl%
\url{https://doi.org/10.1145/3219819.3219824}
\showDOI{\tempurl}


\bibitem[\protect\citeauthoryear{Yang, Wang, Li, Liu, and Sun}{Yang
  et~al\mbox{.}}{2014}]%
        {6969789}
\bibfield{author}{\bibinfo{person}{F. Yang}, \bibinfo{person}{S. Wang},
  \bibinfo{person}{J. Li}, \bibinfo{person}{Z. Liu}, {and} \bibinfo{person}{Q.
  Sun}.} \bibinfo{year}{2014}\natexlab{}.
\newblock \showarticletitle{An overview of Internet of Vehicles}.
\newblock \bibinfo{journal}{\emph{China Communications}} \bibinfo{volume}{11},
  \bibinfo{number}{10} (\bibinfo{date}{Oct} \bibinfo{year}{2014}),
  \bibinfo{pages}{1--15}.
\newblock
\showISSN{1673-5447}
\urldef\tempurl%
\url{https://doi.org/10.1109/CC.2014.6969789}
\showDOI{\tempurl}


\bibitem[\protect\citeauthoryear{Yang, Luo, Li, Zhou, Zhang, and Wang}{Yang
  et~al\mbox{.}}{2018}]%
        {pmlr-v80-yang18d}
\bibfield{author}{\bibinfo{person}{Yaodong Yang}, \bibinfo{person}{Rui Luo},
  \bibinfo{person}{Minne Li}, \bibinfo{person}{Ming Zhou},
  \bibinfo{person}{Weinan Zhang}, {and} \bibinfo{person}{Jun Wang}.}
  \bibinfo{year}{2018}\natexlab{}.
\newblock \showarticletitle{Mean Field Multi-Agent Reinforcement Learning}. In
  \bibinfo{booktitle}{\emph{Proceedings of the 35th International Conference on
  Machine Learning}} \emph{(\bibinfo{series}{Proceedings of Machine Learning
  Research})}, \bibfield{editor}{\bibinfo{person}{Jennifer Dy} {and}
  \bibinfo{person}{Andreas Krause}} (Eds.), Vol.~\bibinfo{volume}{80}.
  \bibinfo{publisher}{PMLR}, \bibinfo{address}{Stockholmsmässan, Stockholm
  Sweden}, \bibinfo{pages}{5567--5576}.
\newblock


\bibitem[\protect\citeauthoryear{Zanella, Bui, Castellani, Vangelista, and
  Zorzi}{Zanella et~al\mbox{.}}{2014}]%
        {6740844}
\bibfield{author}{\bibinfo{person}{A. Zanella}, \bibinfo{person}{N. Bui},
  \bibinfo{person}{A. Castellani}, \bibinfo{person}{L. Vangelista}, {and}
  \bibinfo{person}{M. Zorzi}.} \bibinfo{year}{2014}\natexlab{}.
\newblock \showarticletitle{Internet of Things for Smart Cities}.
\newblock \bibinfo{journal}{\emph{IEEE Internet of Things Journal}}
  \bibinfo{volume}{1}, \bibinfo{number}{1} (\bibinfo{date}{Feb}
  \bibinfo{year}{2014}), \bibinfo{pages}{22--32}.
\newblock
\showISSN{2327-4662}
\urldef\tempurl%
\url{https://doi.org/10.1109/JIOT.2014.2306328}
\showDOI{\tempurl}


\bibitem[\protect\citeauthoryear{Zhang, Hu, Min, Wu, Zhang, Feng, Gong, and
  Ye}{Zhang et~al\mbox{.}}{2017}]%
        {Zhang:2017:TOD:3097983.3098138}
\bibfield{author}{\bibinfo{person}{Lingyu Zhang}, \bibinfo{person}{Tao Hu},
  \bibinfo{person}{Yue Min}, \bibinfo{person}{Guobin Wu},
  \bibinfo{person}{Junying Zhang}, \bibinfo{person}{Pengcheng Feng},
  \bibinfo{person}{Pinghua Gong}, {and} \bibinfo{person}{Jieping Ye}.}
  \bibinfo{year}{2017}\natexlab{}.
\newblock \showarticletitle{A Taxi Order Dispatch Model Based On Combinatorial
  Optimization}. In \bibinfo{booktitle}{\emph{Proceedings of the 23rd ACM
  SIGKDD International Conference on Knowledge Discovery and Data Mining}}
  \emph{(\bibinfo{series}{KDD '17})}. \bibinfo{publisher}{ACM},
  \bibinfo{address}{New York, NY, USA}, \bibinfo{pages}{2151--2159}.
\newblock
\showISBNx{978-1-4503-4887-4}
\urldef\tempurl%
\url{https://doi.org/10.1145/3097983.3098138}
\showDOI{\tempurl}


\end{thebibliography}

\end{document}